\newcommand*{\QEDA}
{\hfill\ensuremath{\blacksquare}}
\newtheorem{theo}{Theorem}
\newtheorem{theorem}{Theorem}
\newtheorem{axiom}[theorem]{Axiom}
\newtheorem{conjecture}[theorem]{Conjecture}
\newtheorem{corollary}{Corollary}
\newtheorem{definition}[theorem]{Definition}
\newtheorem{exercise}[theorem]{Exercise}
\newtheorem{lemma}{Lemma}
\newtheorem{proposition}[theorem]{Proposition}
\newtheorem{remark}{Remark}
\let\pdfoutput=\undefined\fi
\chardef\@x10\chardef\@xv60
\def\tcitime{
\def\@time{%
  \@minute\time\@hour\@minute\divide\@hour\@xv
  \ifnum\@hour<\@x 0\fi\the\@hour:%
  \multiply\@hour\@xv\advance\@minute-\@hour
  \ifnum\@minute<\@x 0\fi\the\@minute
  }}%
\def\x@hyperref#1#2#3{%
   \catcode`\~ = 12
   \catcode`\$ = 12
   \catcode`\_ = 12
   \catcode`\# = 12
   \catcode`\& = 12
   \catcode`\% = 12
   \y@hyperref{#1}{#2}{#3}%
}
\def\y@hyperref#1#2#3#4{%
   #2\ref{#4}#3
   \catcode`\~ = 13
   \catcode`\$ = 3
   \catcode`\_ = 8
   \catcode`\# = 6
   \catcode`\& = 4
   \catcode`\% = 14
}
\def\QCTOpt[#1]#2{%
  \def\QCTOptB{#1}
  \def\QCTOptA{#2}
}
\def\QCTNOpt#1{%
  \def\QCTOptA{#1}
  \let\QCTOptB\empty
}
\def\Qct{%
  \@ifnextchar[{%
    \QCTOpt}{\QCTNOpt}
}
\def\QCBOpt[#1]#2{%
  \def\QCBOptB{#1}%
  \def\QCBOptA{#2}%
}
\def\QCBNOpt#1{%
  \def\QCBOptA{#1}%
  \let\QCBOptB\empty
}
\def\Qcb{%
  \@ifnextchar[{%
    \QCBOpt}{\QCBNOpt}%
}
\def\PrepCapArgs{%
  \ifx\QCBOptA\empty
    \ifx\QCTOptA\empty
      {}%
    \else
      \ifx\QCTOptB\empty
        {\QCTOptA}%
      \else
        [\QCTOptB]{\QCTOptA}%
      \fi
    \fi
  \else
    \ifx\QCBOptA\empty
      {}%
    \else
      \ifx\QCBOptB\empty
        {\QCBOptA}%
      \else
        [\QCBOptB]{\QCBOptA}%
      \fi
    \fi
  \fi
}
\def\GRAPHICSPS#1{%
 \ifcase\GRAPHICSTYPE
   \special{ps: #1}%
 \or
   \special{language "PS", include "#1"}%
 \fi
}%
\def\graffile#1#2#3#4{%
    \bgroup
       \@inlabelfalse
       \leavevmode
       \@ifundefined{bbl@deactivate}{\def~{\string~}}{\activesoff}%
        \raise -#4 \BOXTHEFRAME{%
           \hbox to #2{\raise #3\hbox to #2{\null #1\hfil}}}%
    \egroup
}%
\def\draftbox#1#2#3#4{%
 \leavevmode\raise -#4 \hbox{%
  \frame{\rlap{\protect\tiny #1}\hbox to #2%
   {\vrule height#3 width\z@ depth\z@\hfil}%
  }%
 }%
}%
\let\nographics=\@msidraft
\newif\ifwasdraft
\def\GRAPHIC#1#2#3#4#5{%
   \ifnum\@msidraft=\@ne\draftbox{#2}{#3}{#4}{#5}%
   \else\graffile{#1}{#3}{#4}{#5}%
   \fi
}
\def\addtoLaTeXparams#1{%
    \edef\LaTeXparams{\LaTeXparams #1}}%
\newif\ifBoxFrame \BoxFramefalse
\newif\ifOverFrame \OverFramefalse
\newif\ifUnderFrame \UnderFramefalse
\def\BOXTHEFRAME#1{%
   \hbox{%
      \ifBoxFrame
         \frame{#1}%
      \else
         {#1}%
      \fi
   }%
}
\def\doFRAMEparams#1{\BoxFramefalse\OverFramefalse\UnderFramefalse\readFRAMEparams#1\end}%
\def\readFRAMEparams#1{%
 \ifx#1\end%
  \let\next=\relax
  \else
  \ifx#1i\dispkind=\z@\fi
  \ifx#1d\dispkind=\@ne\fi
  \ifx#1f\dispkind=\tw@\fi
  \ifx#1t\addtoLaTeXparams{t}\fi
  \ifx#1b\addtoLaTeXparams{b}\fi
  \ifx#1p\addtoLaTeXparams{p}\fi
  \ifx#1h\addtoLaTeXparams{h}\fi
  \ifx#1X\BoxFrametrue\fi
  \ifx#1O\OverFrametrue\fi
  \ifx#1U\UnderFrametrue\fi
  \ifx#1w
    \ifnum\@msidraft=1\wasdrafttrue\else\wasdraftfalse\fi
    \@msidraft=\@ne
  \fi
  \let\next=\readFRAMEparams
  \fi
 \next
 }%
\def\IFRAME#1#2#3#4#5#6{%
      \bgroup
      \let\QCTOptA\empty
      \let\QCTOptB\empty
      \let\QCBOptA\empty
      \let\QCBOptB\empty
      #6%
      \parindent=0pt
      \leftskip=0pt
      \rightskip=0pt
      \setbox0=\hbox{\QCBOptA}%
      \@tempdima=#1\relax
      \ifOverFrame
          \typeout{This is not implemented yet}%
          \show\HELP
      \else
         \ifdim\wd0>\@tempdima
            \advance\@tempdima by \@tempdima
            \ifdim\wd0 >\@tempdima
               \setbox1 =\vbox{%
                  \unskip\hbox to \@tempdima{\hfill\GRAPHIC{#5}{#4}{#1}{#2}{#3}\hfill}%
                  \unskip\hbox to \@tempdima{\parbox[b]{\@tempdima}{\QCBOptA}}%
               }%
               \wd1=\@tempdima
            \else
               \textwidth=\wd0
               \setbox1 =\vbox{%
                 \noindent\hbox to \wd0{\hfill\GRAPHIC{#5}{#4}{#1}{#2}{#3}\hfill}\\%
                 \noindent\hbox{\QCBOptA}%
               }%
               \wd1=\wd0
            \fi
         \else
            \ifdim\wd0>0pt
              \hsize=\@tempdima
              \setbox1=\vbox{%
                \unskip\GRAPHIC{#5}{#4}{#1}{#2}{0pt}%
                \break
                \unskip\hbox to \@tempdima{\hfill \QCBOptA\hfill}%
              }%
              \wd1=\@tempdima
           \else
              \hsize=\@tempdima
              \setbox1=\vbox{%
                \unskip\GRAPHIC{#5}{#4}{#1}{#2}{0pt}%
              }%
              \wd1=\@tempdima
           \fi
         \fi
         \@tempdimb=\ht1
         \advance\@tempdimb by -#2
         \advance\@tempdimb by #3
         \leavevmode
         \raise -\@tempdimb \hbox{\box1}%
      \fi
      \egroup%
}%
\def\DFRAME#1#2#3#4#5{%
  \vspace\topsep
  \hfil\break
  \bgroup
     \leftskip\@flushglue
     \rightskip\@flushglue
     \parindent\z@
     \parfillskip\z@skip
     \let\QCTOptA\empty
     \let\QCTOptB\empty
     \let\QCBOptA\empty
     \let\QCBOptB\empty
     \vbox\bgroup
        \ifOverFrame
           #5\QCTOptA\par
        \fi
        \GRAPHIC{#4}{#3}{#1}{#2}{\z@}%
        \ifUnderFrame
           \break#5\QCBOptA
        \fi
     \egroup
  \egroup
  \vspace\topsep
  \break
}%
\def\FFRAME#1#2#3#4#5#6#7{%
  \@ifundefined{floatstyle}
    {
     \begin{figure}[#1]%
    }
    {
     \ifx#1h
      \begin{figure}[H]%
     \else
      \begin{figure}[#1]%
     \fi
    }
  \let\QCTOptA\empty
  \let\QCTOptB\empty
  \let\QCBOptA\empty
  \let\QCBOptB\empty
  \ifOverFrame
    #4
    \ifx\QCTOptA\empty
    \else
      \ifx\QCTOptB\empty
        \caption{\QCTOptA}%
      \else
        \caption[\QCTOptB]{\QCTOptA}%
      \fi
    \fi
    \ifUnderFrame\else
      \label{#5}%
    \fi
  \else
    \UnderFrametrue%
  \fi
  \begin{center}\GRAPHIC{#7}{#6}{#2}{#3}{\z@}\end{center}%
  \ifUnderFrame
    #4
    \ifx\QCBOptA\empty
      \caption{}%
    \else
      \ifx\QCBOptB\empty
        \caption{\QCBOptA}%
      \else
        \caption[\QCBOptB]{\QCBOptA}%
      \fi
    \fi
    \label{#5}%
  \fi
  \end{figure}%
 }%
\def\makeactives{
  \catcode`\"=\active
  \catcode`\;=\active
  \catcode`\:=\active
  \catcode`\'=\active
  \catcode`\~=\active
}
   \gdef\activesoff{%
      \def"{\string"}%
      \def;{\string;}%
      \def:{\string:}%
      \def'{\string'}%
      \def~{\string~}%
    }
\def\FRAME#1#2#3#4#5#6#7#8{%
 \bgroup
 \ifnum\@msidraft=\@ne
   \wasdrafttrue
 \else
   \wasdraftfalse%
 \fi
 \def\LaTeXparams{}%
 \dispkind=\z@
 \def\LaTeXparams{}%
 \doFRAMEparams{#1}%
 \ifnum\dispkind=\z@\IFRAME{#2}{#3}{#4}{#7}{#8}{#5}\else
  \ifnum\dispkind=\@ne\DFRAME{#2}{#3}{#7}{#8}{#5}\else
   \ifnum\dispkind=\tw@
    \edef\@tempa{\noexpand\FFRAME{\LaTeXparams}}%
    \@tempa{#2}{#3}{#5}{#6}{#7}{#8}%
    \fi
   \fi
  \fi
  \ifwasdraft\@msidraft=1\else\@msidraft=0\fi{}%
  \egroup
 }%
\def\TEXUX#1{"texux"}
\long\def\QQQ#1#2{%
     \long\expandafter\def\csname#1\endcsname{#2}}%
\long\def\QQA#1#2{}%
\def\QTR#1#2{{\csname#1\endcsname {#2}}}%
\def\EXPAND#1[#2]#3{}%
\def\NOEXPAND#1[#2]#3{}%
\def\LaTeXparent#1{}%
\def\ChildStyles#1{}%
\def\ChildDefaults#1{}%
\def\QTagDef#1#2#3{}%
  \providecommand{\UNICODE}[2][]{\protect\rule{.1in}{.1in}}
  \providecommand{\U}[1]{\protect\rule{.1in}{.1in}}
\def\QQfnmark#1{\footnotemark}
 \def\abstract{%
  \if@twocolumn
   \section*{Abstract (Not appropriate in this style!)}%
   \else \small
   \begin{center}{\bf Abstract\vspace{-.5em}\vspace{\z@}}\end{center}%
   \quotation
   \fi
  }%
   \def\registered{\relax\ifmmode{}\r@gistered
                    \else$\m@th\r@gistered$\fi}%
 \def\r@gistered{^{\ooalign
  {\hfil\raise.07ex\hbox{$\scriptstyle\rm\text{R}$}\hfil\crcr
  \mathhexbox20D}}}}{}%
\newdimen\theight
\def\newfmtname{LaTeX2e}
  \DeclareOldFontCommand{\rm}{\normalfont\rmfamily}{\mathrm}
  \DeclareOldFontCommand{\sf}{\normalfont\sffamily}{\mathsf}
  \DeclareOldFontCommand{\tt}{\normalfont\ttfamily}{\mathtt}
  \DeclareOldFontCommand{\bf}{\normalfont\bfseries}{\mathbf}
  \DeclareOldFontCommand{\it}{\normalfont\itshape}{\mathit}
  \DeclareOldFontCommand{\sl}{\normalfont\slshape}{\@nomath\sl}
  \DeclareOldFontCommand{\sc}{\normalfont\scshape}{\@nomath\sc}
\def\alpha{{\Greekmath 010B}}%
\def\beta{{\Greekmath 010C}}%
\def\gamma{{\Greekmath 010D}}%
\def\delta{{\Greekmath 010E}}%
\def\epsilon{{\Greekmath 010F}}%
\def\zeta{{\Greekmath 0110}}%
\def\eta{{\Greekmath 0111}}%
\def\theta{{\Greekmath 0112}}%
\def\iota{{\Greekmath 0113}}%
\def\kappa{{\Greekmath 0114}}%
\def\lambda{{\Greekmath 0115}}%
\def\mu{{\Greekmath 0116}}%
\def\nu{{\Greekmath 0117}}%
\def\xi{{\Greekmath 0118}}%
\def\pi{{\Greekmath 0119}}%
\def\rho{{\Greekmath 011A}}%
\def\sigma{{\Greekmath 011B}}%
\def\tau{{\Greekmath 011C}}%
\def\upsilon{{\Greekmath 011D}}%
\def\phi{{\Greekmath 011E}}%
\def\chi{{\Greekmath 011F}}%
\def\psi{{\Greekmath 0120}}%
\def\omega{{\Greekmath 0121}}%
\def\varepsilon{{\Greekmath 0122}}%
\def\vartheta{{\Greekmath 0123}}%
\def\varpi{{\Greekmath 0124}}%
\def\varrho{{\Greekmath 0125}}%
\def\varsigma{{\Greekmath 0126}}%
\def\varphi{{\Greekmath 0127}}%
\def\nabla{{\Greekmath 0272}}
\def\FindBoldGroup{%
   {\setbox0=\hbox{$\mathbf{x\global\edef\theboldgroup{\the\mathgroup}}$}}%
}
\def\Greekmath#1#2#3#4{%
    \if@compatibility
        \ifnum\mathgroup=\symbold
           \mathchoice{\mbox{\boldmath$\displaystyle\mathchar"#1#2#3#4$}}%
                      {\mbox{\boldmath$\textstyle\mathchar"#1#2#3#4$}}%
                      {\mbox{\boldmath$\scriptstyle\mathchar"#1#2#3#4$}}%
                      {\mbox{\boldmath$\scriptscriptstyle\mathchar"#1#2#3#4$}}%
        \else
           \mathchar"#1#2#3#4%
        \fi
    \else
        \FindBoldGroup
        \ifnum\mathgroup=\theboldgroup 
           \mathchoice{\mbox{\boldmath$\displaystyle\mathchar"#1#2#3#4$}}%
                      {\mbox{\boldmath$\textstyle\mathchar"#1#2#3#4$}}%
                      {\mbox{\boldmath$\scriptstyle\mathchar"#1#2#3#4$}}%
                      {\mbox{\boldmath$\scriptscriptstyle\mathchar"#1#2#3#4$}}%
        \else
           \mathchar"#1#2#3#4%
        \fi
      \fi}
\newif\ifGreekBold  \GreekBoldfalse
\let\SAVEPBF=\pbf
\def\pbf{\GreekBoldtrue\SAVEPBF}%
  \newcounter{equationnumber}
  \def\mathletters{%
     \addtocounter{equation}{1}
     \edef\@currentlabel{\theequation}%
     \setcounter{equationnumber}{\c@equation}
     \setcounter{equation}{0}%
     \edef\theequation{\@currentlabel\noexpand\alph{equation}}%
  }
    \def\BibTeX{{\rm B\kern-.05em{\sc i\kern-.025em b}\kern-.08em
                 T\kern-.1667em\lower.7ex\hbox{E}\kern-.125emX}}}{}%
\def\AmS{{\protect\usefont{OMS}{cmsy}{m}{n}%
                A\kern-.1667em\lower.5ex\hbox{M}\kern-.125emS}}}{}%
\def\@@eqncr{\let\@tempa\relax
    \ifcase\@eqcnt \def\@tempa{& & &}\or \def\@tempa{& &}%
      \else \def\@tempa{&}\fi
     \@tempa
     \if@eqnsw
        \iftag@
           \@taggnum
        \else
           \@eqnnum\stepcounter{equation}%
        \fi
     \fi
     \global\tag@false
     \global\@eqnswtrue
     \global\@eqcnt\z@\cr}
\def\TCItag{\@ifnextchar*{\@TCItagstar}{\@TCItag}}
\def\@TCItag#1{%
    \global\tag@true
    \global\def\@taggnum{(#1)}%
    \global\def\@currentlabel{#1}}
\def\@TCItagstar*#1{%
    \global\tag@true
    \global\def\@taggnum{#1}%
    \global\def\@currentlabel{#1}}
\def\tint{\msi@int\textstyle\int}%
\def\tiint{\msi@int\textstyle\iint}%
\def\tiiint{\msi@int\textstyle\iiint}%
\def\tiiiint{\msi@int\textstyle\iiiint}%
\def\tidotsint{\msi@int\textstyle\idotsint}%
\def\toint{\msi@int\textstyle\oint}%
\newtoks\temptoksa
\newtoks\temptoksb
\newtoks\temptoksc
\def\msi@int#1#2{%
 \def\@temp{{#1#2\the\temptoksc_{\the\temptoksa}^{\the\temptoksb}}}%
 \futurelet\@nextcs
 \@int
}
\def\@int{%
   \ifx\@nextcs\limits
      \typeout{Found limits}%
      \temptoksc={\limits}%
      \let\@next\@intgobble%
   \else\ifx\@nextcs\nolimits
      \typeout{Found nolimits}%
      \temptoksc={\nolimits}%
      \let\@next\@intgobble%
   \else
      \typeout{Did not find limits or no limits}%
      \temptoksc={}%
      \let\@next\msi@limits%
   \fi\fi
   \@next
}%
\def\@intgobble#1{%
   \typeout{arg is #1}%
   \msi@limits
}
\def\msi@limits{%
   \temptoksa={}%
   \temptoksb={}%
   \@ifnextchar_{\@limitsa}{\@limitsb}%
}
\def\@limitsa_#1{%
   \temptoksa={#1}%
   \@ifnextchar^{\@limitsc}{\@temp}%
}
\def\@limitsb{%
   \@ifnextchar^{\@limitsc}{\@temp}%
}
\def\@limitsc^#1{%
   \temptoksb={#1}%
   \@ifnextchar_{\@limitsd}{\@temp}%
}
\def\@limitsd_#1{%
   \temptoksa={#1}%
   \@temp
}
\def\dint{\msi@int\displaystyle\int}%
\def\diint{\msi@int\displaystyle\iint}%
\def\diiint{\msi@int\displaystyle\iiint}%
\def\diiiint{\msi@int\displaystyle\iiiint}%
\def\didotsint{\msi@int\displaystyle\idotsint}%
\def\doint{\msi@int\displaystyle\oint}%
\def\ExitTCILatex{\makeatother }
\if@compatibility\message{amsmath already loaded}\fi\aftergroup\ExitTCILatex}
\if@compatibility\message{amstex already loaded}\fi\aftergroup\ExitTCILatex}
\if@compatibility\message{amsgen already loaded}\fi\aftergroup\ExitTCILatex}
\let\DOTSI\relax
\def\RIfM@{\relax\ifmmode}%
\def\FN@{\futurelet\next}%
\def\iint{\DOTSI\intno@\tw@\FN@\ints@}%
\def\iiint{\DOTSI\intno@\thr@@\FN@\ints@}%
\def\iiiint{\DOTSI\intno@4 \FN@\ints@}%
\def\idotsint{\DOTSI\intno@\z@\FN@\ints@}%
\def\ints@{\findlimits@\ints@@}%
\newif\iflimtoken@
\newif\iflimits@
\def\findlimits@{\limtoken@true\ifx\next\limits\limits@true
 \else\ifx\next\nolimits\limits@false\else
 \limtoken@false\ifx\ilimits@\nolimits\limits@false\else
 \ifinner\limits@false\else\limits@true\fi\fi\fi\fi}%
\def\multint@{\int\ifnum\intno@=\z@\intdots@                          
 \else\intkern@\fi                                                    
 \ifnum\intno@>\tw@\int\intkern@\fi                                   
 \ifnum\intno@>\thr@@\int\intkern@\fi                                 
 \int}
\def\multintlimits@{\intop\ifnum\intno@=\z@\intdots@\else\intkern@\fi
 \ifnum\intno@>\tw@\intop\intkern@\fi
 \ifnum\intno@>\thr@@\intop\intkern@\fi\intop}%
\def\intic@{%
    \mathchoice{\hskip.5em}{\hskip.4em}{\hskip.4em}{\hskip.4em}}%
\def\negintic@{\mathchoice
 {\hskip-.5em}{\hskip-.4em}{\hskip-.4em}{\hskip-.4em}}%
\def\ints@@{\iflimtoken@                                              
 \def\ints@@@{\iflimits@\negintic@
   \mathop{\intic@\multintlimits@}\limits                             
  \else\multint@\nolimits\fi                                          
  \eat@}
 \else                                                                
 \def\ints@@@{\iflimits@\negintic@
  \mathop{\intic@\multintlimits@}\limits\else
  \multint@\nolimits\fi}\fi\ints@@@}%
\def\intkern@{\mathchoice{\!\!\!}{\!\!}{\!\!}{\!\!}}%
\def\plaincdots@{\mathinner{\cdotp\cdotp\cdotp}}%
\def\intdots@{\mathchoice{\plaincdots@}%
 {{\cdotp}\mkern1.5mu{\cdotp}\mkern1.5mu{\cdotp}}%
 {{\cdotp}\mkern1mu{\cdotp}\mkern1mu{\cdotp}}%
 {{\cdotp}\mkern1mu{\cdotp}\mkern1mu{\cdotp}}}%
\def\RIfM@{\relax\protect\ifmmode}
\def\text{\RIfM@\expandafter\text@\else\expandafter\mbox\fi}
\let\nfss@text\text
\def\text@#1{\mathchoice
   {\textdef@\displaystyle\f@size{#1}}%
   {\textdef@\textstyle\tf@size{\firstchoice@false #1}}%
   {\textdef@\textstyle\sf@size{\firstchoice@false #1}}%
   {\textdef@\textstyle \ssf@size{\firstchoice@false #1}}%
   \glb@settings}
\def\textdef@#1#2#3{\hbox{{%
                    \everymath{#1}%
                    \let\f@size#2\selectfont
                    #3}}}
\newif\iffirstchoice@
\def\Let@{\relax\iffalse{\fi\let\\=\cr\iffalse}\fi}%
\def\vspace@{\def\vspace##1{\crcr\noalign{\vskip##1\relax}}}%
\def\multilimits@{\bgroup\vspace@\Let@
 \baselineskip\fontdimen10 \scriptfont\tw@
 \advance\baselineskip\fontdimen12 \scriptfont\tw@
 \lineskip\thr@@\fontdimen8 \scriptfont\thr@@
 \lineskiplimit\lineskip
 \vbox\bgroup\ialign\bgroup\hfil$\m@th\scriptstyle{##}$\hfil\crcr}%
\def\Sb{_\multilimits@}%
\def\endSb{\crcr\egroup\egroup\egroup}%
\def\Sp{^\multilimits@}%
\newdimen\ex@
\def\rightarrowfill@#1{$#1\m@th\mathord-\mkern-6mu\cleaders
 \hbox{$#1\mkern-2mu\mathord-\mkern-2mu$}\hfill
 \mkern-6mu\mathord\rightarrow$}%
\def\leftarrowfill@#1{$#1\m@th\mathord\leftarrow\mkern-6mu\cleaders
 \hbox{$#1\mkern-2mu\mathord-\mkern-2mu$}\hfill\mkern-6mu\mathord-$}%
\def\leftrightarrowfill@#1{$#1\m@th\mathord\leftarrow
\mkern-6mu\cleaders
 \hbox{$#1\mkern-2mu\mathord-\mkern-2mu$}\hfill
 \mkern-6mu\mathord\rightarrow$}%
\def\overrightarrow{\mathpalette\overrightarrow@}%
\def\overrightarrow@#1#2{\vbox{\ialign{##\crcr\rightarrowfill@#1\crcr
 \noalign{\kern-\ex@\nointerlineskip}$\m@th\hfil#1#2\hfil$\crcr}}}%
\def\overleftarrow{\mathpalette\overleftarrow@}%
\def\overleftarrow@#1#2{\vbox{\ialign{##\crcr\leftarrowfill@#1\crcr
 \noalign{\kern-\ex@\nointerlineskip}$\m@th\hfil#1#2\hfil$\crcr}}}%
\def\overleftrightarrow{\mathpalette\overleftrightarrow@}%
\def\overleftrightarrow@#1#2{\vbox{\ialign{##\crcr
   \leftrightarrowfill@#1\crcr
 \noalign{\kern-\ex@\nointerlineskip}$\m@th\hfil#1#2\hfil$\crcr}}}%
\def\underrightarrow{\mathpalette\underrightarrow@}%
\def\underrightarrow@#1#2{\vtop{\ialign{##\crcr$\m@th\hfil#1#2\hfil
  $\crcr\noalign{\nointerlineskip}\rightarrowfill@#1\crcr}}}%
\def\underleftarrow{\mathpalette\underleftarrow@}%
\def\underleftarrow@#1#2{\vtop{\ialign{##\crcr$\m@th\hfil#1#2\hfil
  $\crcr\noalign{\nointerlineskip}\leftarrowfill@#1\crcr}}}%
\def\underleftrightarrow{\mathpalette\underleftrightarrow@}%
\def\underleftrightarrow@#1#2{\vtop{\ialign{##\crcr$\m@th
  \hfil#1#2\hfil$\crcr
 \noalign{\nointerlineskip}\leftrightarrowfill@#1\crcr}}}%
\def\qopnamewl@#1{\mathop{\operator@font#1}\nlimits@}
\let\nlimits@\displaylimits
\def\setboxz@h{\setbox\z@\hbox}
\def\varlim@#1#2{\mathop{\vtop{\ialign{##\crcr
 \hfil$#1\m@th\operator@font lim$\hfil\crcr
 \noalign{\nointerlineskip}#2#1\crcr
 \noalign{\nointerlineskip\kern-\ex@}\crcr}}}}
 \def\rightarrowfill@#1{\m@th\setboxz@h{$#1-$}\ht\z@\z@
  $#1\copy\z@\mkern-6mu\cleaders
  \hbox{$#1\mkern-2mu\box\z@\mkern-2mu$}\hfill
  \mkern-6mu\mathord\rightarrow$}
\def\leftarrowfill@#1{\m@th\setboxz@h{$#1-$}\ht\z@\z@
  $#1\mathord\leftarrow\mkern-6mu\cleaders
  \hbox{$#1\mkern-2mu\copy\z@\mkern-2mu$}\hfill
  \mkern-6mu\box\z@$}
\def\projlim{\qopnamewl@{proj\,lim}}
\def\injlim{\qopnamewl@{inj\,lim}}
\def\varinjlim{\mathpalette\varlim@\rightarrowfill@}
\def\varprojlim{\mathpalette\varlim@\leftarrowfill@}
\def\varliminf{\mathpalette\varliminf@{}}
\def\varliminf@#1{\mathop{\underline{\vrule\@depth.2\ex@\@width\z@
   \hbox{$#1\m@th\operator@font lim$}}}}
\def\varlimsup{\mathpalette\varlimsup@{}}
\def\varlimsup@#1{\mathop{\overline
  {\hbox{$#1\m@th\operator@font lim$}}}}
\def\align{\@verbatim \frenchspacing\@vobeyspaces \@alignverbatim
You are using the "align" environment in a style in which it is not defined.}
\let\csname endalign*\endcsname =\endtrivlist
\def\alignat{\@verbatim \frenchspacing\@vobeyspaces \@alignatverbatim
You are using the "alignat" environment in a style in which it is not defined.}
\let\csname endalignat*\endcsname =\endtrivlist
\def\xalignat{\@verbatim \frenchspacing\@vobeyspaces \@xalignatverbatim
You are using the "xalignat" environment in a style in which it is not defined.}
\let\csname endxalignat*\endcsname =\endtrivlist
\def\gather{\@verbatim \frenchspacing\@vobeyspaces \@gatherverbatim
You are using the "gather" environment in a style in which it is not defined.}
\let\csname endgather*\endcsname =\endtrivlist
\def\multiline{\@verbatim \frenchspacing\@vobeyspaces \@multilineverbatim
You are using the "multiline" environment in a style in which it is not defined.}
\let\csname endmultiline*\endcsname =\endtrivlist
\def\arrax{\@verbatim \frenchspacing\@vobeyspaces \@arraxverbatim
You are using a type of "array" construct that is only allowed in AmS-LaTeX.}
\def\tabulax{\@verbatim \frenchspacing\@vobeyspaces \@tabulaxverbatim
You are using a type of "tabular" construct that is only allowed in AmS-LaTeX.}
\let\csname endarrax*\endcsname =\endtrivlist
\let\csname endtabulax*\endcsname =\endtrivlist
 \def\endequation{%
     \ifmmode\ifinner 
      \iftag@
        \addtocounter{equation}{-1} 
        $\hfil
           \displaywidth\linewidth\@taggnum\egroup \endtrivlist
        \global\tag@false
        \global\@ignoretrue
      \else
        $\hfil
           \displaywidth\linewidth\@eqnnum\egroup \endtrivlist
        \global\tag@false
        \global\@ignoretrue
      \fi
     \else
      \iftag@
        \addtocounter{equation}{-1} 
        \eqno \hbox{\@taggnum}
        \global\tag@false%
        $$\global\@ignoretrue
      \else
        \eqno \hbox{\@eqnnum}
        $$\global\@ignoretrue
      \fi
     \fi\fi
 }
 \newif\iftag@ \tag@false
 \def\TCItag{\@ifnextchar*{\@TCItagstar}{\@TCItag}}
 \def\@TCItag#1{%
     \global\tag@true
     \global\def\@taggnum{(#1)}%
     \global\def\@currentlabel{#1}}
 \def\@TCItagstar*#1{%
     \global\tag@true
     \global\def\@taggnum{#1}%
     \global\def\@currentlabel{#1}}
     \def\tag{\@ifnextchar*{\@tagstar}{\@tag}}
     \def\@tag#1{%
         \global\tag@true
         \global\def\@taggnum{(#1)}}
     \def\@tagstar*#1{%
         \global\tag@true
         \global\def\@taggnum{#1}}
\def\dfrac#1#2{{\displaystyle {#1 \over #2}}}%
\begin{document}

\title{Unshared Secret Key Cryptography: Finite Constellation Inputs and
Ideal Secrecy Outage} \pubid{} \specialpapernotice{}
\author{\authorblockN{Shuiyin  Liu, Yi Hong, and Emanuele Viterbo\thanks
{This work is supported by ARC under Grant Discovery Project No.
DP130100336.}}
\authorblockA{ECSE Department, Monash University\\
Melbourne, VIC 3800, Australia\\
Email: shuiyin.liu, yi.hong, emanuele.viterbo@monash.edu}}
\maketitle

\begin{abstract}
The \emph{Unshared Secret Key Cryptography} (USK), recently proposed by the
authors, guarantees Shannon's \emph{ideal secrecy} and \emph{perfect secrecy}
for MIMO wiretap channels, without requiring secret key exchange. However,
the requirement of infinite constellation inputs limits its applicability to
practical systems. In this paper, we propose a practical USK scheme using
finite constellation inputs. The new scheme is based on a \emph{cooperative
jamming} technique, and is valid for the case where the eavesdropper has
more antennas than the transmitter. We show that Shannon's ideal secrecy can
be achieved with an arbitrarily small outage probability.
\end{abstract}

\section{Introduction}

Symmetric-key cryptography (e.g. AES \cite{Daemen:2002:DRA}) has
traditionally been the major technology for providing a secure gateway for
communication and data exchanges at the network layer. One weakness of this
approach is that the transmitter (Alice) and the legitimate receiver (Bob)
must trust some secure communications channel to transmit the secret key,
and there may be a chance that others (Eve) can discover the secret key
during transmission. Physical layer security (PLS) is an alternative way of
providing non-key based security solutions \cite{Wyner75}. The PLS
approaches leverage state-of-the art channel coding (e.g. polar codes \cite%
{Mahdavifar11}) to enhance security at the physical layer. The general
problem of PLS is the requirement of an infinite-length wiretap code to
approach the \emph{secrecy capacity}\cite{Hellman78}. This limits the
applicability of these schemes to practical communication systems.

Our previous work \cite{Liu13Letter,Shuiyin13_3,Liu147} has shown
that it is possible to protect the confidential message without
requiring either secret key exchange or wiretap codes. In
particular, we proposed the \emph{Unshared
Secret Key Cryptography} (USK) to comply with two security goals: \textit{(i)%
} the secret key is not needed by Bob to decipher, \textit{(ii)} the secret
key is fully affecting Eve's ability to decipher the ciphertext. Although
those two goals seem to contradict each other, this can be reconciled by
aligning a one-time pad (OTP) secret key within the null space of a MIMO
channel between Alice and Bob. In this way, the OTP nulls out at Bob, but
adds a certain degree of uncertainty to the received signal at Eve. The USK
is rooted in the \emph{artificial noise} (AN) technique \cite{Goel08}, and
has been shown to achieve Shannon's \emph{ideal secrecy} and \emph{perfect
secrecy }\cite{Liu147}. An interesting case is ideal secrecy: an encryption
algorithm is ideally secure if no matter how much of ciphertext is
intercepted by Eve, there is no unique solution of the plaintext but many
solutions of comparable probability \cite{Shannon46}. An ideal cryptosystem
has \emph{information-theoretic security} (i.e., cryptanalytically
unbreakable), but not Shannon's perfect secrecy (i.e., plaintext and
ciphertext are mutually independent) \cite{Shannon46}.

The original USK scheme \cite{Liu147} may be regarded as being of
theoretical interest only, since it bases on two assumptions:\ \textit{(i)}
infinite lattice constellation input, \textit{(ii)} Alice has more antennas
than Eve. The first assumption is used to prove the ideal secrecy, and the
second assumption ensures that Eve cannot run zero-forcing (ZF) attack to
remove the OTP \cite{Shuiyin14_0}. In this work, we remove these assumptions
and show how the idea of USK can be applied to practical systems. We put
forward a new security model and measure:

\begin{enumerate}
\item \emph{Finite constellation inputs}: we use finite input alphabets
based on \textit{QAM} signalling.

\item \emph{Cooperative jammers}: the OTP is generated from the third-party
jammers. This renders the USK viable for the cases where Eve has more
antennas than Alice.

\item \emph{Ideal secrecy outage}: we show that Shannon's ideal secrecy can
be achieved for finite constellation input with an arbitrarily small outage
probability.
\end{enumerate}

Section II presents the system model. Section III describes the USK
cryptosystem with finite constellation input.\ Section IV analyzes the
security of the USK cryptosystem. Section V sets out the theoretical and
practical conclusions. The Appendix contains the proofs of the theorems.

\textit{Notation:} Matrices and column vectors are denoted by upper and
lowercase boldface letters, and the Hermitian transpose, inverse,
pseudo-inverse of a matrix $\mathbf{B}$ by $\mathbf{B}^{H}$, $\mathbf{B}%
^{-1} $, and $\mathbf{B}^{\dagger }$, respectively. We use the standard
asymptotic notation $f\left( x\right) =O\left( g\left( x\right) \right) $
when $\lim \sup\limits_{x\rightarrow \infty }|f(x)/g(x)|<\infty $. The real,
complex, integer, and complex integer numbers are denoted by $\mathbb{R}$, $%
\mathbb{C} $, $\mathbb{Z}$, and $\mathbb{Z}\left[ i\right] $, respectively. $%
H(\cdot )$, $H(\cdot |\cdot )$, and $I(\cdot ; \cdot )$ represent entropy,
conditional entropy, and mutual information, respectively.

\section{System Model}

\subsection{Cooperative Jamming}

The security model is based on our recently proposed MIMO cooperative
jamming scheme using artificial noise \cite{Shuiyin14_isita}. This model is
quite different from the conventional cooperative jamming scheme (multiuser
jammers over AWGN channels) in \cite{Tekin08}. The detailed differences can
be referred to \cite{Shuiyin14_isita}.

In our setting, we consider a MIMO wiretap system model that includes a
transmitter (Alice), a legitimate receiver (Bob), and a passive eavesdropper
(Eve), with $N_{\text{A}}$, $N_{\text{B}}$, and $N_{\text{E}}$ antennas,
respectively. Additionally, a set of $N$ friendly jammers $\left\{ \text{J}%
_{i}\right\} _{1}^{N}$ are used to protect against eavesdropping, where each
one has $N_{\text{J,}i}$ antennas, respectively. We assume that $N_{\text{B}%
}\geq N_{\text{A}}$ and $N_{\text{J,}i}>N_{\text{B}}$. Let
\begin{equation}
\setlength{\abovedisplayskip}{3pt} \setlength{\belowdisplayskip}{3pt} N_{%
\text{J}}=\sum_{i=1}^{N}N_{\text{J,}i}\text{,}
\end{equation}
be the total number of antennas among all the jammers.

Alice sends a information vector $\mathbf{u}$, which is uniformly selected
from a $M$-QAM constellation $\mathcal{Q}^{N_{\text{A}}}$, where $\Re (%
\mathcal{Q}\mathbf{)}=\Im (\mathcal{Q}\mathbf{)}=\{0,1,...,\sqrt{M}-1\}$.
For simplicity, we ignore the shifting and scaling operations at Alice to
minimize the transmit power.

Let the matrices $\left\{ \mathbf{H}_{\text{JB,}i}\in \mathbb{C}
^{N_{\text{B}}\times N_{\text{J,}i}}\right\} _{1}^{N}$ represent the
channels from $\text{J}_{i}$ to Bob, for $1\leq i\leq N$. Suppose
that $\text{J}_{i}$ knows $\mathbf{H}_{\text{JB,}i}$, using the AN
scheme \cite{Goel08}, the $i^{\text{th}}$ jammer transmits%
\begin{equation}
\setlength{\abovedisplayskip}{3pt} \setlength{\belowdisplayskip}{3pt}
\mathbf{x}_{\text{J},i}=\mathbf{Z}_{i}\mathbf{v}_{i}\text{,}
\end{equation}%
where $\mathbf{Z}_{i}=\mbox{null}(\mathbf{H}_{\text{JB,}i})\in
\mathbb{C} ^{N_{\text{J,}i}\times (N_{\text{J,}i}-N_{\text{B}})}$
represents an
orthonormal basis of the null space of $\mathbf{H}_{\text{JB,}i}$, i.e., $%
\mathbf{H}_{\text{JB,}i}\mathbf{Z}_{i}=\mathbf{0}_{N_{\text{B}}\times (N_{%
\text{J,}i}-N_{\text{B}})}$. Each J$_{i}$ randomly and independently
(without any predefined distribution) chooses a vector
$\mathbf{v}_{i}\in \mathbb{C}^{N_{\text{J,}i}-N_{\text{B}}}$.

We set a peak jamming power constraint $P_{\text{J}}$, i.e.,%
\begin{equation}
\setlength{\abovedisplayskip}{3pt} \setlength{\belowdisplayskip}{3pt} P_{%
\text{J}}\geq \sum_{i=1}^{N}||\mathbf{x}_{\text{J},i}||^{2}=\sum_{i=1}^{N}||%
\mathbf{v}_{i}||^{2}\text{.}
\end{equation}

The signals received by Bob and Eve are given by%
\begin{align}
\mathbf{z}& =\mathbf{Hu}+\sum_{i=1}^{N}\mathbf{H}_{\text{JB,}i}\mathbf{x}_{%
\text{J},i}+\mathbf{n}_{\text{B}}=\mathbf{Hu}+\mathbf{n}_{\text{B}}\text{,}
\label{z} \\
\mathbf{y}& =\mathbf{Gu}+\sum_{i=1}^{N}\mathbf{H}_{\text{JE,}i}\mathbf{x}_{%
\text{J},i}+\mathbf{n}_{\text{E}}=\mathbf{Gu}+\mathbf{\hat{H}}_{\text{JE}}%
\mathbf{\hat{Z}\hat{v}}+\mathbf{n}_{\text{E}}\text{,}  \label{y}
\end{align}%
where $\mathbf{\hat{H}}_{\text{JE}}=[\mathbf{H}_{\text{JE,}1}$, ... , $\mathbf{H}%
_{\text{JE,}N}]$, $\mathbf{\hat{Z}}=\mathrm{diag}(\left\{ \mathbf{Z}%
_{i}\right\} _{1}^{N})$, and $\mathbf{\hat{v}}=[\mathbf{v}_{1}^{T}$, ... , $%
\mathbf{v}_{N}^{T}]^{T}$.

The matrices $\mathbf{H}\in \mathbb{C}^{N_{\text{B}}\times
N_{\text{A}}},\mathbf{G}\in \mathbb{C}^{N_{\text{E}}\times
N_{\text{A}}}$ represent the channel from Alice to Bob
and Alice to Eve, respectively, while $\mathbf{H}_{\text{JE,}i}\in \mathbb{C}%
^{N_{\text{E}}\times N_{\text{J,}i}}$ is the channel from jammer J$_{i}$ to
Eve. All channel matrices are assumed to be mutually independent (i.e., all
terminals are not co-located) and have i.i.d. entries $\sim \mathcal{N}_{%
\mathbb{C}}(0$, $1)$. Except for Eve, no one knows $\mathbf{G}$ and $\mathbf{%
\hat{H}}_{\text{JE}}$. We assume that $\mathbf{n}_{\text{B}}$ and $\mathbf{n}%
_{\text{E}}$ are white Gaussian noise (AWGN) vectors at Bob and Eve,
respectively, with i.i.d. entries $\sim \mathcal{N}_{\mathbb{C}}(0$, $\sigma
_{\text{B}}^{2})$ and $\mathcal{N}_{\mathbb{C}}(0$, $\sigma _{\text{E}}^{2})$%
.

\begin{remark}
The vector $\mathbf{\hat{v}}$ is independently generated by jammers, but not
needed by Bob to decipher, while it is fully affecting Eve's ability to
decipher the original message $\mathbf{u}$. This enables us to interpret $%
\mathbf{\hat{v}}$ as an \emph{unshared} OTP.
\end{remark}

\subsection{Channel Assumptions}

We consider the worst-case scenario for Alice and Bob:

\begin{itemize}
\item Alice does not know any channel matrix.

\item Bob only knows $\mathbf{H}$.

\item The $i^{\text{th }}$jammer $\text{J}_{i}$ only knows $\mathbf{H}_{%
\text{JB,}i}$, for all $i$.

\item Eve has perfect knowledge of all channel matrices.

\item No upper bound on $N_{\text{E}}$ or Eve's SNR.
\end{itemize}

We focus on information-theoretic security, hence, our analysis will
concentrate on Eve's equivocation $H(\mathbf{{{\mathbf{u}}}|y})$. For
simplicity, we further assume that Eve's channel is noiseless:%
\begin{equation}
\setlength{\abovedisplayskip}{3pt} \setlength{\belowdisplayskip}{3pt}
\mathbf{y}=\mathbf{Gu}+\mathbf{\hat{H}}_{\text{JE}}\mathbf{\hat{Z}\hat{v}}%
\text{.}  \label{WC_S}
\end{equation}%
Using \emph{Data Processing Inequality}, it is simple to show that Eve's
channel noise can only increase her equivocation:%
\begin{equation*}
H(\mathbf{{{\mathbf{u}}}|Gu}+\mathbf{\hat{H}}_{\text{JE}}\mathbf{\hat{Z}\hat{%
v}})\leq H(\mathbf{{{\mathbf{u}}}|Gu}+\mathbf{\hat{H}}_{\text{JE}}\mathbf{%
\hat{Z}\hat{v}+n}_{\text{E}})\text{.}
\end{equation*}

Moreover, we assume that%
\begin{equation}
\setlength{\abovedisplayskip}{3pt} \setlength{\belowdisplayskip}{3pt} N_{%
\text{E}}<\sum_{i=1}^{N}N_{\text{J,}i}=N_{\text{J}}\text{,}  \label{con_NJ}
\end{equation}%
That requirement ensures that Eve cannot remove the OTP $\mathbf{\hat{v}}$
by running ZF equalization. To see this, we first let%
\begin{equation}
\mathbf{\hat{H}}_{\text{JB}}=[\mathbf{H}_{\text{JB,}1},\text{... },\mathbf{H}%
_{\text{JB,}N}]\text{.}
\end{equation}%
If $N_{\text{E}}\geq N_{\text{J}}$, $\mathbf{\hat{H}}_{\text{JE}}$ has a
left inverse, denoted by $\mathbf{\hat{H}}_{\text{JE}}^{\dag }$, then the
term $\mathbf{\hat{H}}_{\text{JE}}\mathbf{\hat{Z}\hat{v}}$ in (\ref{WC_S})
can be removed by multiplying $\mathbf{y}$ by $\mathbf{{{\mathbf{W}}}=%
\mathbf{\hat{H}}}_{\text{JB}}\mathbf{\hat{H}}_{\text{JE}}^{\dag }$, i.e.,%
\begin{equation}
\mathbf{\mathbf{W}\mathbf{y}}=\mathbf{{{\mathbf{WGu+}}}\mathbf{\hat{H}}}_{%
\text{JB}}\mathbf{\hat{Z}\hat{v}={{\mathbf{WGu}}}}\text{.}
\end{equation}%
Note that if $N_{\text{E}}<N_{\text{J}}$, this operation is not possible.

\begin{remark}
The cooperative jamming approach allows us to replace the constraint $N_{%
\text{E}}<N_{\text{A}}$ in the original USK scheme \cite{Liu147} by $N_{%
\text{E}}<N_{\text{J}}$, which is much easier to accomplish by increasing
the number of jammers.
\end{remark}

\subsection{Shannon's Ideal Secrecy}

We consider a cryptosystem where a sequence of $K$ messages $\left\{ \mathbf{%
m}_{i}\right\} _{1}^{K}$ are enciphered into the cryptograms $\left\{
\mathbf{y}_{i}\right\} _{1}^{K}$ using a sequence of secret keys $\left\{
k_{i}\right\} _{1}^{K}$. We assume that $\left\{ \mathbf{m}_{i}\right\}
_{1}^{K}$ and $\left\{ k_{i}\right\} _{1}^{K}$ are mutually independent. Let
$L_{i}$ be the space size of $k_{i}$. Shannon introduced the concept of
ideal secrecy in \cite{Shannon46} as: \textquotedblleft No matter how much
material is intercepted, there is not a unique solution but many of
comparable probability.\textquotedblright\ In this work, we give a formal
definition of ideal secrecy.

\begin{definition}
\label{Def_IS copy(1)}A secrecy system is \emph{ideal} when%
\begin{equation}
\Pr ({ \mathbf{m}_{i}}|\left\{ \mathbf{y}_{i}\right\}
_{1}^{K})=\Pr \left\{ \mathbf{m}_{i}\mathbf{|y}_{i}\right\} =1/L_{i}\text{, for all }i%
\text{.}  \label{SE0}
\end{equation}
\end{definition}

\begin{remark}
In terms of Eve's equivocation, using the entropy chain rule, ideal secrecy
is achieved when%
\begin{equation}
\setlength{\abovedisplayskip}{3pt}\setlength{\belowdisplayskip}{3pt}%
H(\left\{ \mathbf{m}_{i}\right\} _{1}^{K}|\left\{ \mathbf{y}_{i}\right\}
_{1}^{K})=\sum_{i=1}^{K}H(\mathbf{m}_{i}\mathbf{|y}_{i})=\sum_{i=1}^{K}\log
L_{i}\text{,}  \label{SE1}
\end{equation}%
where $L_{i}\geq 2$ for all $i$. This condition will be used as our design
criterion for ideal secrecy.
\end{remark}

\subsection{Lattice Preliminaries}

To describe our scheme, it is convenient to introduce some lattice
preliminaries. An $n$-dimensional \emph{complex lattice} $\Lambda _{\mathbb{C%
}}$ in a complex space $\mathbb{C}^{m}$ ($n\leq m$) is the discrete set
defined by:%
\begin{equation*}
\Lambda _{\mathbb{C}}=\left\{ \mathbf{Bu}\text{: }\mathbf{u\in }\text{ }%
\mathbb{Z}\left[ i\right] ^{n}\right\} \text{,}
\end{equation*}%
where the \emph{basis} matrix $\mathbf{B=}\left[ \mathbf{b}_{1}\cdots
\mathbf{b}_{n}\right] $ has linearly independent columns.

$\Lambda _{\mathbb{C}}$ can also be easily represented as $2n$-dimensional
\emph{real} lattice $\Lambda _{\mathbb{R}}$ \cite{BK:Conway93}. In what
follows, we introduce some lattice parameters of $\Lambda _{\mathbb{C}}$,
which have a corresponding value for $\Lambda _{\mathbb{R}}$.

The \emph{Voronoi region} of $\Lambda _{\mathbb{C}}$, defined by%
\begin{equation*}
\mathcal{V}_{i}\left( \Lambda _{\mathbb{C}}\right) =\left\{ \mathbf{y}\in
\mathbb{C}^{m}\text{: }\Vert \mathbf{y}-\mathbf{x}_{i}\Vert \leq \Vert
\mathbf{y}-\mathbf{x}_{j}\Vert ,\forall \text{ }\mathbf{x}_{i}\neq \mathbf{x}%
_{j}\right\} ,
\end{equation*}%
gives the nearest neighbor decoding region of lattice point $\mathbf{x}_{i}$%
. The volume of any $\mathcal{V}_{i}\left( \Lambda _{\mathbb{C}}\right) $,
defined as vol$(\Lambda _{\mathbb{C}})\triangleq |\det (\mathbf{B}^{H}%
\mathbf{B})|$, is equivalent to the volume of the corresponding real
lattice. The \emph{effective radius} of $\Lambda _{\mathbb{C}}$, denoted by $%
r_{\text{eff}}(\Lambda _{\mathbb{C}})$, is the radius of a sphere$\,$of
volume vol$(\Lambda _{\mathbb{C}})$ \cite{Zamir08}. For large $n$, it is
approximately%
\begin{equation}
r_{\text{eff}}(\Lambda _{\mathbb{C}})\approx \sqrt{n/(\pi e)}\text{vol}%
(\Lambda _{\mathbb{C}})^{\frac{1}{2n}}.  \label{r_eff}
\end{equation}

\section{Unshared Secret Key Cryptosystem With Finite Constellation Inputs}

In this section, we show that the idea of USK can be applied to practical
systems using finite constellation inputs. We define the concept of secrecy
outage and a secrecy outage probability. We will later show how such
probability can be made arbitrarily small by considering larger
constellation size and jamming power.

\subsection{Encryption}

We consider a sequence of $K$ mutually independent messages $\left\{ \mathbf{%
m}_{i}\right\} _{1}^{K}$, where each one contains $n$ mutually independent
information bits. For each $\mathbf{m}$, Alice maps the corresponding $n$
bits to $N_{\text{A}}$ elements of $\mathbf{{{\mathbf{u}}}}$ for one channel
use. Each elements of $\mathbf{{{\mathbf{u}}}}$ is uniformly selected from a
finite lattice constellation $\mathcal{Q}{^{N_{\text{A}}}}$, where $\Re (%
\mathcal{Q}\mathbf{)}=\Im (\mathcal{Q}\mathbf{)}=\{0,1,...,\sqrt{M}-1\}$.
Consequently, we have%
\begin{equation}
\setlength{\abovedisplayskip}{3pt}\setlength{\belowdisplayskip}{3pt}n=N_{%
\text{A}}\log _{2}M\text{.}
\end{equation}%
Let $\mathbf{{{\mathbf{u}}}}_{i}$ be the transmitted vector corresponding to
message $\mathbf{m}_{i}$.

Across the $K$ channel uses, we apply a sequence of secret keys $\left\{
\mathbf{\hat{v}}_{i}\right\} _{1}^{K}$ to protect $\left\{ \mathbf{u}%
_{i}\right\} _{1}^{K}$. We consider secure transmissions in a fast fading
MIMO wiretap channel, i.e., all the channels $\mathbf{H}_{i}$, $\mathbf{G}%
_{i}$, $\mathbf{\hat{H}}_{\text{JB},i}$, $\mathbf{\hat{H}}_{\text{JE,}i}$
are assumed to be mutually independent and change independently for every
channel use.

We assume that $\left\{ \mathbf{\hat{v}}_{i}\right\} _{1}^{K}$ and $\left\{
\mathbf{u}_{i}\right\} _{1}^{K}$ are mutually independent. Using (\ref{SE0}%
), we only need to demonstrate the encryption process for one transmitted
vector $\mathbf{{{\mathbf{u}}}}_{i}$, corresponding to a message $\mathbf{m}%
_{i}$. We first interpret the signal model (\ref{WC_S}) as an encryption
algorithm:%
\begin{equation}
\setlength{\abovedisplayskip}{3pt}\setlength{\belowdisplayskip}{3pt}\mathbf{y%
}_{i}=\mathbf{G}_{i}\mathbf{u}_{i}+\mathbf{\hat{H}}_{\text{JE},i}\mathbf{%
\hat{Z}}_{i}\mathbf{\hat{v}}_{i}\text{.}  \label{N14}
\end{equation}

In detail, for the $i^{\text{th}}$ channel use, the jammers randomly and
independently (without any predefined distribution) choose a OTP $\mathbf{%
\hat{v}}_{i}$, from a ball of radius $\sqrt{P_{\text{J}}}$:%
\begin{equation}
\setlength{\abovedisplayskip}{3pt} \setlength{\belowdisplayskip}{3pt}
S\triangleq \{ \mathbf{\hat{v}}\in \mathbb{C}^{N_{\text{J}}-N\cdot N_{\text{B%
}}}\text{: }||\mathbf{\hat{v}||}^{2}\leq P_{\text{J}}\} \text{,}  \label{set}
\end{equation}%
and encrypts $\mathbf{{{\mathbf{u}}}}_{i}$ to $\mathbf{y}_{i}$ in (\ref{N14}%
) using $\mathbf{\hat{v}}_{i}$, such that $\mathbf{G}_{i}\mathbf{\mathbf{{{%
\mathbf{u}}}}}_{i}$ is the $k_{i}^{\text{th}}$ closest lattice point to $%
\mathbf{y}_{i}$, within the finite lattice%
\begin{equation}
\setlength{\abovedisplayskip}{3pt} \setlength{\belowdisplayskip}{3pt}
\Lambda _{\text{F},i}\triangleq \{\mathbf{G}_{i}\mathbf{\mathbf{{{\mathbf{u}}%
}}},{\mathbf{{\mathbf{u}}}}\in \mathcal{Q}{^{N_{\text{A}}}\}} \text{.}
\label{FN_LT}
\end{equation}%
The value of $k_{i}$ ranges from $1$ to $L_{i}$, where%
\begin{equation}
\setlength{\abovedisplayskip}{3pt} \setlength{\belowdisplayskip}{3pt}
L_{i}=|S_{R_{\max ,i}}\cap \Lambda _{\text{F},i}|\text{.}  \label{LII}
\end{equation}%
and $S_{R_{\max ,i}}$ is a sphere centered at $\mathbf{y}_{i}$ with radius:%
\begin{equation}
\setlength{\abovedisplayskip}{3pt} \setlength{\belowdisplayskip}{3pt}
R_{\max ,i}(P_{\text{J}})\triangleq \max_{||\mathbf{\hat{v}}_{i}\mathbf{||}%
^{2}\leq P_{\text{J}}}\left\Vert \mathbf{\hat{H}}_{\text{JE,}i}\mathbf{\hat{Z%
}}_{i}\mathbf{\hat{v}}_{i}\right\Vert =\sqrt{\lambda _{\max ,i}P_{\text{J}}}%
\text{.}  \label{R_maxi}
\end{equation}%
where $\lambda _{\max ,i}$ is the largest eigenvalue of $(\mathbf{\hat{H}}_{%
\text{JE,}i}\mathbf{\hat{Z}}_{i})^{H}(\mathbf{\hat{H}}_{\text{JE,}i}\mathbf{%
\hat{Z}}_{i})$.

As shown in Fig.~1, the full and empty dots correspond to the infinite
lattice%
\begin{equation}
\setlength{\abovedisplayskip}{3pt} \setlength{\belowdisplayskip}{3pt}
\Lambda _{\mathbb{C},i}\triangleq \{\mathbf{G}_{i}\mathbf{\mathbf{{{\mathbf{u%
}}}}},\mathbf{u}\in {\mathbb{Z}\left[ i\right] ^{N_{\text{A}}}\}}\text{.}
\label{Inf_L}
\end{equation}%
The finite subset of $\Lambda _{\mathbb{C},i}$, $\Lambda _{\text{F},i}$, is
demonstrated by the full dots. Consequently, $L_{i}$ represents the number
of full dots within the sphere $S_{R_{\max ,i}}$.

The security problem lies in how much Eve knows about $k_{i}$. Since we
assume that the realizations of $\mathbf{G}_{i},\mathbf{\hat{H}}_{\text{JE,}%
i},\mathbf{\hat{Z}}_{i}$ are known to Eve, $k_{i}$ is a function of $\mathbf{%
\hat{v}}_{i}$. Since $\mathbf{\hat{v}}_{i}$ is randomly and independently
selected by the jammers and is never shared with anyone, Eve can neither
know its realization nor its distribution. Thus, given $\mathbf{y}_{i}$, Eve
is not able to estimate the distribution of the index $k_{i}$, or more
specifically, she only knows that $\mathbf{G}_{i}\mathbf{\mathbf{{{\mathbf{u}%
}}}}_{i}\in S_{R_{\max ,i}}\cap \Lambda _{\text{F},i}$.

\begin{remark}
The index $k_{i}$ can be interpreted as the \emph{effective} one-time pad
secret key, whose randomness comes from $\mathbf{\hat{v}}_{i}$. The
effective key space size is $L_{i}$.
\end{remark}

\begin{figure}[tbp]
\centering\includegraphics[scale=0.75]{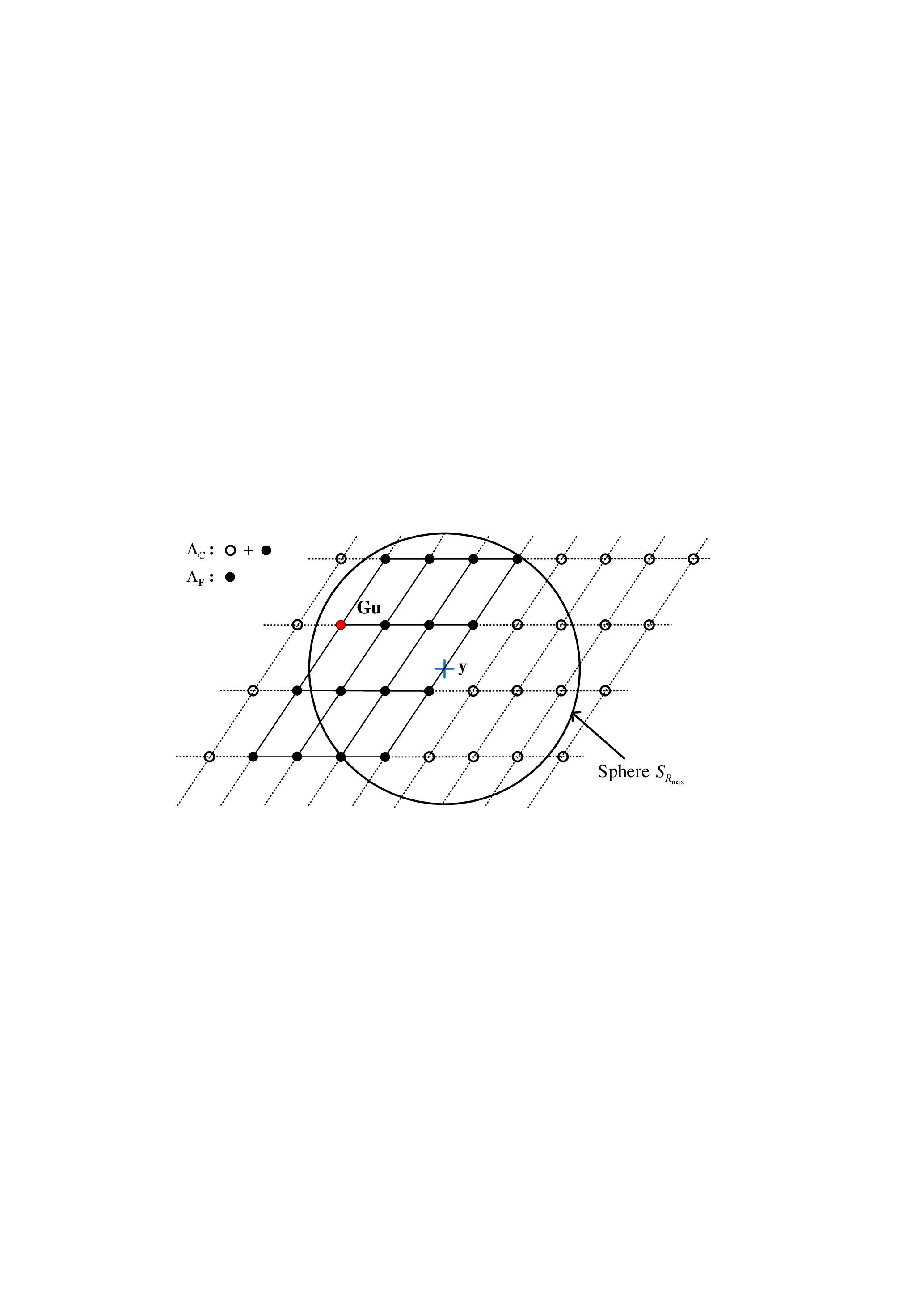}\vspace{-3mm}
\caption{ The USK cryptosystem with finite
constellations.}\vspace{-3 mm}
\end{figure}

\subsection{Analyzing Eve's Equivocation}
We then show that for each $\mathbf{u}_{i}$, Eve cannot obtain a
unique solution but $L_{i}$ indistinguishable candidates. The
posterior probability
that Eve obtains $\mathbf{{{\mathbf{u}}}}_{i}$, or equivalently, finds $%
k_{i} $, from the cryptogram $\mathbf{y}_{i}$, is%
\begin{equation}
\setlength{\abovedisplayskip}{3pt}\setlength{\belowdisplayskip}{3pt}\Pr
\left\{ \mathbf{{{\mathbf{u}}}}_{i}\mathbf{|y}_{i}\right\} =\Pr \left\{ k_{i}%
\mathbf{|y}_{i}\right\} =\Pr \left\{ \mathbf{{{\mathbf{u}}}}_{i}\mathbf{|{{%
\mathbf{u}}}}_{i}\in \mathcal{U}_{i}\right\} \text{,}  \label{PPP_FI}
\end{equation}%
where%
\begin{equation}
\setlength{\abovedisplayskip}{3pt}\setlength{\belowdisplayskip}{3pt}\mathcal{%
U}_{i}=\left\{ \mathbf{u}^{\prime }\text{: }\mathbf{G}_{i}\mathbf{\mathbf{{{%
\mathbf{u}}}}}^{\prime }\in S_{R_{\max ,i}}\cap \Lambda _{\text{F}%
,i}\right\} \text{.}  \label{UF}
\end{equation}

Due to the use of uniform constellation $\mathcal{Q}^{N_{\text{A}}}$,
according to Bayes' theorem, we have%
\begin{equation}
\setlength{\abovedisplayskip}{3pt} \setlength{\belowdisplayskip}{3pt} \Pr
\left\{ \mathbf{{{\mathbf{u}}}}_{i}\mathbf{|{{\mathbf{u}}}}_{i}\in \mathcal{U%
}_{i}\right\} =1/L_{i}\text{.}  \label{Post_FI}
\end{equation}

To recover the message $\mathbf{m}_{i}$, Eve has to recover the vector $\mathbf{{%
{\mathbf{u}}}}_{i}$, or equivalently, find $k_{i}$. Therefore, Eve's
equivocation is given by%
\begin{equation}
H(\mathbf{m}_{i}\mathbf{|y}_{i})=H(k_{i}\mathbf{|y}_{i})=H(\mathbf{{{\mathbf{%
u}}}}_{i}|\mathbf{y}_{i})\text{.}  \label{uncer_FI1}
\end{equation}%
Moreover, since $\mathbf{u}_{i}$ is independent of $\mathbf{u}_{j}$ and $%
\mathbf{y}_{j}$, we have%
\begin{equation}
H(\left\{ \mathbf{{{\mathbf{u}}}}_{i}\right\} _{1}^{K}|\left\{ \mathbf{y}%
_{i}\right\} _{1}^{K})=\sum_{i=1}^{K}H(\mathbf{u}_{i}\mathbf{|y}%
_{i})=\sum_{i=1}^{K}\log L_{i}\text{.}  \label{uncer_FI}
\end{equation}

\begin{remark}
From (\ref{SE1}) and (\ref{uncer_FI}), ideal secrecy is achieved if%
\begin{equation}
L_{i}\geq 2,~\text{for all}~i\text{.}  \label{L_CON}
\end{equation}
\end{remark}

\subsection{Ideal Secrecy Outage}

We then study how to satisfy the condition in (\ref{L_CON}). Note that the
values in $\left\{ L_{i}\right\} _{1}^{K}$ are known to Eve, but not Alice.
From Alice's perspective, according to (\ref{LII}) and (\ref{UF}), $L_{i}$
is a function of $\mathbf{G}_{i}$ $\mathbf{\hat{H}}_{\text{JE,}i},\mathbf{%
\hat{Z}}_{i},\mathbf{\hat{v}}_{i}$, thus a random variable. Although Alice
cannot know the exact values in $\left\{ L_{i}\right\} _{1}^{K}$, she may be
able to evaluate the joint probability $\Pr \left\{ L_{1}\geq d\text{, ..., }%
L_{K}\geq d\right\} $, for any $2\leq d\leq M^{N_{\text{A}}}$.

We refer to the event%
\begin{equation}
L_{i}<d\text{,}
\end{equation}%
as the \emph{ideal secrecy outage}. We refer to the probability
\begin{equation}
P_{\text{out}}(d,K)=1-\Pr \left\{ L_{1}\geq d\text{, ..., }L_{K}\geq
d\right\} \text{,}  \label{PF_out}
\end{equation}%
as the secrecy outage probability. From (\ref{L_CON}) and (\ref{PF_out}), if
$P_{\text{out}}(d,K)\rightarrow 0$, then%
\begin{equation}
L_{i}\geq d,~\text{for all}~i\text{.}
\end{equation}

In the next section, we will show that Alice can ensure $P_{\text{out}%
}(d,K)\rightarrow 0$ by increasing the jamming power $P_{\text{J}}$
and constellation size $M$.

\section{The Security of USK}

In this section, we show that the USK with the finite constellation $%
\mathcal{Q}^{N_{\text{A}}}$ provides Shannon's ideal secrecy with an
arbitrarily small outage.

\subsection{A Useful Lemma}

We define%
\begin{equation}
\Theta (P_{\text{J}})\triangleq \frac{2R_{\max }(P_{\text{J}})}{\sqrt{M}r_{%
\text{eff}}(\Lambda _{\mathbb{C}})}\text{,}  \label{RRR}
\end{equation}%
where $\Lambda _{\mathbb{C}}$ is given in (\ref{Inf_L}), $r_{\text{eff}%
}(\Lambda _{\mathbb{C}})$ is given in (\ref{r_eff}), and $R_{\max }(P_{\text{%
J}})$ is given in (\ref{R_maxi}).

From Alice's perspective, $\Theta (P_{\text{J}})$ is a function of $\mathbf{%
G,\hat{H}}_{\text{JE,}i},\mathbf{\hat{Z}}_{i}$, thus is a random variable.
To prove the main theorem, we first introduce the following lemma.

\begin{lemma}
\label{Th4}%
\begin{align}
& \Pr \left\{ \Theta (P_{\text{J}})<x\right\}  \notag \\
& >\prod\limits_{j=1}^{N_{\text{A}}}B_{\substack{ N_{\text{E}}N_{\text{J}},
\\ N_{\text{E}}-j+1}}\left( \frac{N_{\text{E}}N_{\text{J}}g(x,j)}{N_{\text{E}%
}N_{\text{J}}g(x,j)+N_{\text{E}}-j+1}\right) \text{,}  \label{P_t}
\end{align}%
where%
\begin{equation}
g(x,j)=\frac{x^{2}MN_{\text{A}}(N_{\text{E}}-j+1)}{4\pi eP_{\text{J}}N_{%
\text{E}}N_{\text{J}}(N_{\text{J}}-N\cdot N_{\text{B}})},  \label{g_x}
\end{equation}%
and $B_{a\text{,}b}(x)$ is the \emph{regularized incomplete beta} function
\cite{Imbeta}:%
\begin{equation}
B_{a\text{,}b}(x)\triangleq \sum_{j=a}^{a+b-1}\left(
\begin{array}{c}
a+b-1 \\
j%
\end{array}%
\right) x^{j}(1-x)^{a+b-1-j}\text{.}  \label{Beta}
\end{equation}
\end{lemma}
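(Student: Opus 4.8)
The plan is to turn the statement $\{\Theta(P_{\text{J}})<x\}$ into a tail event for two \emph{independent} random matrices, dominate each part by a single Gamma variable, and then factorize the resulting joint probability through a positive-association argument.

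First I would substitute the closed forms (\ref{r_eff}) and (\ref{R_maxi}) into (\ref{RRR}) and square. Since the lattice $\Lambda_{\mathbb{C}}$ of (\ref{Inf_L}) has complex dimension $N_{\text{A}}$, we have $r_{\text{eff}}(\Lambda_{\mathbb{C}})^{2}=\tfrac{N_{\text{A}}}{\pi e}\,|\det(\mathbf{G}^{H}\mathbf{G})|^{1/N_{\text{A}}}$, while $R_{\max}^{2}=\lambda_{\max}P_{\text{J}}$, so that
\begin{equation*}
\Theta^{2}=\frac{4\pi e\,\lambda_{\max}P_{\text{J}}}{M N_{\text{A}}\,|\det(\mathbf{G}^{H}\mathbf{G})|^{1/N_{\text{A}}}},
\end{equation*}
and hence $\{\Theta<x\}=\{\lambda_{\max}<c\,|\det(\mathbf{G}^{H}\mathbf{G})|^{1/N_{\text{A}}}\}$ with $c=\tfrac{x^{2}MN_{\text{A}}}{4\pi e P_{\text{J}}}$. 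Next I bound $\lambda_{\max}=\Vert\mathbf{\hat{H}}_{\text{JE}}\mathbf{\hat{Z}}\Vert_{2}^{2}\le\Vert\mathbf{\hat{H}}_{\text{JE}}\Vert_{F}^{2}=:A$, using that $\mathbf{\hat{Z}}$ has orthonormal columns so $\Vert\mathbf{\hat{Z}}\Vert_{2}=1$. Because $\mathbf{\hat{Z}}$ is a function of the $\{\mathbf{H}_{\text{JB},i}\}$, which are independent of $\mathbf{\hat{H}}_{\text{JE}}$ and of $\mathbf{G}$, conditioning on $\mathbf{\hat{Z}}$ leaves $A\sim\mathrm{Gamma}(N_{\text{E}}N_{\text{J}})$ (a sum of $N_{\text{E}}N_{\text{J}}$ unit-mean exponentials) independent of $\mathbf{G}$. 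Therefore $\Pr\{\Theta<x\}\ge\Pr\{A<c\,|\det(\mathbf{G}^{H}\mathbf{G})|^{1/N_{\text{A}}}\}$.

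The second ingredient is the Bartlett decomposition of the complex Wishart determinant: for $\mathbf{G}\in\mathbb{C}^{N_{\text{E}}\times N_{\text{A}}}$ with i.i.d.\ $\mathcal{N}_{\mathbb{C}}(0,1)$ entries one has $\det(\mathbf{G}^{H}\mathbf{G})\stackrel{d}{=}\prod_{j=1}^{N_{\text{A}}}\gamma_{j}$, where the $\gamma_{j}\sim\mathrm{Gamma}(N_{\text{E}}-j+1)$ are mutually independent and independent of $A$; this is the source of the shape parameters $N_{\text{E}}-j+1$ in (\ref{P_t})--(\ref{g_x}). Writing $m=N_{\text{J}}-N\cdot N_{\text{B}}$ for the OTP dimension in (\ref{set}) and setting $t=c/m$, the arithmetic-geometric mean inequality gives
\begin{equation*}
\bigcap_{j=1}^{N_{\text{A}}}\{A<t\gamma_{j}\}\ \subseteq\ \{A<t\,(\textstyle\prod_{j}\gamma_{j})^{1/N_{\text{A}}}\}\ \subseteq\ \{A<c\,(\textstyle\prod_{j}\gamma_{j})^{1/N_{\text{A}}}\},
\end{equation*}
where the last inclusion is strict on a set of positive probability because $t<c$. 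Consequently $\Pr\{\Theta<x\}$ strictly exceeds $\Pr\{\bigcap_{j}\{A<t\gamma_{j}\}\}$.

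It remains to factorize this joint probability. Conditioning on $A$ and using the independence of the $\gamma_{j}$ gives $\Pr\{\bigcap_{j}A<t\gamma_{j}\}=\mathbb{E}_{A}\big[\prod_{j}\Pr\{\gamma_{j}>A/t\}\big]$; each map $a\mapsto\Pr\{\gamma_{j}>a/t\}$ is non-increasing, so Chebyshev's sum inequality (the single-variable FKG/association inequality) yields $\mathbb{E}_{A}[\prod_{j}\,\cdot\,]\ge\prod_{j}\mathbb{E}_{A}[\,\cdot\,]=\prod_{j}\Pr\{A<t\gamma_{j}\}$. Finally, for independent $A\sim\mathrm{Gamma}(N_{\text{E}}N_{\text{J}})$ and $\gamma_{j}\sim\mathrm{Gamma}(N_{\text{E}}-j+1)$ the Beta-Gamma relation $A/(A+\gamma_{j})\sim\mathrm{Beta}(N_{\text{E}}N_{\text{J}},N_{\text{E}}-j+1)$ identifies $\Pr\{A<t\gamma_{j}\}=\Pr\{A/\gamma_{j}<t\}=B_{N_{\text{E}}N_{\text{J}},\,N_{\text{E}}-j+1}\!\big(\tfrac{t}{1+t}\big)$, and substituting $t=\tfrac{x^{2}MN_{\text{A}}}{4\pi e P_{\text{J}}m}$ one checks $\tfrac{t}{1+t}=\tfrac{N_{\text{E}}N_{\text{J}}\,g(x,j)}{N_{\text{E}}N_{\text{J}}\,g(x,j)+N_{\text{E}}-j+1}$, exactly the argument in (\ref{P_t}). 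I expect the factorization to be the main obstacle: the $N_{\text{A}}$ events $\{A<t\gamma_{j}\}$ share the single variable $A$ and are therefore dependent, so passing to the product in (\ref{P_t}) requires the correct association argument rather than naive independence, and the slackened threshold $t=c/m$ is precisely what closes this step while also delivering the strict inequality. A secondary point needing care is the constant bookkeeping, in particular invoking the complex-Wishart (Bartlett) factorization with the correct shape parameters and verifying that the factor $N_{\text{J}}-N\cdot N_{\text{B}}$ propagates correctly into $g(x,j)$.
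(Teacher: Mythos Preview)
Your proposal is correct and shares the paper's overall architecture---bound $R_{\max}$ by a Gamma variable built from $\|\hat{\mathbf{H}}_{\text{JE}}\|_F^2$, factor $\det(\mathbf{G}^H\mathbf{G})$ as a product of independent Gammas via the complex Bartlett decomposition, and reduce each factor to an $F$/Beta tail---but the intermediate steps differ in interesting ways.

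The paper introduces the factor $m=N_{\text{J}}-N\cdot N_{\text{B}}$ at the very first step through the Frobenius bound $R_{\max}^2\le P_{\text{J}}\|\hat{\mathbf{H}}_{\text{JE}}\|_F^2\|\hat{\mathbf{Z}}\|_F^2$ with $\|\hat{\mathbf{Z}}\|_F^2=m$; you instead use the sharper operator-norm bound $\lambda_{\max}\le\|\hat{\mathbf{H}}_{\text{JE}}\|_F^2$ and then re-insert $m$ artificially by slackening $c$ to $t=c/m$. Both routes land on the same $g(x,j)$, so this difference is cosmetic. For handling the geometric mean of the $\gamma_j$, the paper replaces it by the harmonic mean (GM$\,\ge\,$HM) to produce a sum $\sum_j A/\gamma_j$ and then splits that sum; you skip the harmonic-mean detour and pass directly to the intersection $\bigcap_j\{A<t\gamma_j\}$ via $A<t\gamma_j\ \forall j\Rightarrow A<t\,\mathrm{GM}(\gamma_j)$. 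Most notably, at the factorization step the paper invokes an inductive splitting argument stated under ``mutual independence'' of the summands $A/\gamma_j$---a hypothesis not literally met here, since all summands share the same numerator $A$. Your FKG/Chebyshev association argument (condition on $A$, note each $a\mapsto\Pr\{\gamma_j>a/t\}$ is non-increasing, then $E[\prod f_j]\ge\prod E[f_j]$) handles this positive dependence correctly and is in fact the clean justification of the paper's step~(b). One minor remark: your slackening $t=c/m$ delivers strict inequality only when $m>1$; when $m=1$ you should instead point to the strictness of the inclusion $\bigcap_j\{A<t\gamma_j\}\subsetneq\{A<t\,\mathrm{GM}(\gamma_j)\}$, which holds on a set of positive probability whenever $N_{\text{A}}\ge 2$ since the $\gamma_j$ are almost surely distinct.
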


\begin{proof}
See Appendix A.
\end{proof}

\subsection{Ideal Secrecy Outage Probability}

An upper bound on $P_{\text{out}}(d,K)$ in (\ref{PF_out}) can be derived
using Lemma \ref{Th4}.

\begin{theo}
\label{Th3}Given $\varepsilon <1$, $d\geq 2$, $M\geq \varepsilon
^{-3-2/N_{\min }}\kappa (d)^{2}$, and $P_{\text{J}}= \varepsilon
^{-2/N_{\min }}\kappa (d)^{2}/\Phi ^{2N_{\text{A}}/N_{\text{E}}}$, then%
\begin{equation}
\setlength{\abovedisplayskip}{3pt}\setlength{\belowdisplayskip}{3pt}P_{\text{%
out}}(d,K)<O(\varepsilon )\text{,}
\end{equation}%
where%
\begin{equation}
N_{\min }\triangleq \min \left\{ N_{\text{E}}-N_{\text{A}}+1\text{, }N_{%
\text{A}}\right\} \text{,}  \label{n_min}
\end{equation}%
\begin{equation}
\kappa (d)\triangleq d^{1/(2N_{\text{E}})}/\sqrt{\pi }\text{,}  \label{kappa}
\end{equation}%
\begin{equation}
\Phi \triangleq \left[ \frac{(N_{\text{E}}-N_{\text{A}})!}{N_{\text{E}}!}%
\right] ^{\frac{1}{2N_{\text{A}}}}\text{,}  \label{Pi_SVD}
\end{equation}%
i.e., ideal secrecy is achieved with probability $1-O(\varepsilon )$.
\end{theo}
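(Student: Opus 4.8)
The plan is to reduce the $K$-use outage in (\ref{PF_out}) to a single-use tail estimate for $\Theta(P_{\text{J}})$, then feed the prescribed $M$ and $P_{\text{J}}$ into Lemma~\ref{Th4} and analyse the resulting incomplete-beta product as $\varepsilon\to0$. First I would exploit the fast-fading assumption: because $\mathbf{G}_i,\mathbf{\hat{H}}_{\text{JE},i},\mathbf{\hat{Z}}_i$ and $\mathbf{\hat{v}}_i$ are independent across the $K$ channel uses, the events $\{L_i\geq d\}$ are independent, so from (\ref{PF_out})
\begin{equation*}
P_{\text{out}}(d,K)=1-\prod_{i=1}^{K}\Pr\{L_i\geq d\}\leq\sum_{i=1}^{K}\Pr\{L_i<d\}.
\end{equation*}
Hence it suffices to make the single-use outage $\Pr\{L_i<d\}$ small enough that the union over the $K$ uses stays $O(\varepsilon)$.

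The key step is to convert the integer count $L_i$ of (\ref{LII}) into a threshold event on $\Theta(P_{\text{J}})$. Using the effective-radius estimate (\ref{r_eff}), the number of points of the infinite lattice $\Lambda_{\mathbb{C},i}$ in (\ref{Inf_L}) lying inside a ball of radius $R_{\max,i}$ is approximately $(R_{\max,i}/r_{\text{eff}}(\Lambda_{\mathbb{C},i}))^{2N_{\text{A}}}=(\sqrt{M}\,\Theta(P_{\text{J}})/2)^{2N_{\text{A}}}$ by the definition (\ref{RRR}) of $\Theta$. Since $L_i$ equals this count up to a finite-constellation boundary correction, the event $\{L_i<d\}$ coincides, up to a bounded geometric constant, with $\{\Theta(P_{\text{J}})<x_0\}$ for $x_0=2d^{1/(2N_{\text{A}})}/\sqrt{M}$. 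I would then invoke Lemma~\ref{Th4} at $x=x_0$ to control $\Pr\{\Theta(P_{\text{J}})<x_0\}$ through the product of regularized incomplete beta functions.

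The substitution of the prescribed parameters is where the stated constants surface. Because $x_0^2\propto1/M$, the factor $M$ cancels in $g(x_0,j)$ of (\ref{g_x}), leaving $g(x_0,j)\propto1/P_{\text{J}}$; inserting $P_{\text{J}}=\varepsilon^{-2/N_{\min}}\kappa(d)^2/\Phi^{2N_{\text{A}}/N_{\text{E}}}$ gives $g(x_0,j)=C_j\,\varepsilon^{2/N_{\min}}$ for constants $C_j$ depending only on the antenna counts and $d$. Feeding these arguments into the beta product and using its tail asymptotics yields a single-use bound of order $\varepsilon$, so that $\sum_{i=1}^{K}\Pr\{L_i<d\}=O(\varepsilon)$. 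Here $\Phi$ of (\ref{Pi_SVD}) enters through the moments of $\det(\mathbf{G}^{H}\mathbf{G})^{-1/(2N_{\text{A}})}$ for the complex Gaussian $\mathbf{G}$ (a ratio of Gamma factors, which is exactly the factorial expression $[(N_{\text{E}}-N_{\text{A}})!/N_{\text{E}}!]^{1/(2N_{\text{A}})}$), $\kappa(d)$ of (\ref{kappa}) absorbs $d^{1/(2N_{\text{A}})}$ together with the beta normalisation into the $N_{\text{E}}$ scale, and $N_{\min}$ of (\ref{n_min}) emerges as the smaller of the two competing exponents — the lattice dimension $N_{\text{A}}$ and the least-favourable singular direction $N_{\text{E}}-N_{\text{A}}+1$ of $\mathbf{G}$ — i.e.\ as the effective diversity order of the scheme.

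I expect the counting step to be the main obstacle. Passing rigorously from the integer $L_i$ to the continuous $\Theta$ is delicate because the finite slab $\Lambda_{\text{F},i}$ lives in the higher-dimensional space $\mathbb{C}^{N_{\text{E}}}$ while the sphere is centred at $\mathbf{y}_i$, which generally lies off the column space of $\mathbf{G}_i$; one must show that the worst-case position of the centre and the truncation to $\mathcal{Q}^{N_{\text{A}}}$ cost only a bounded multiplicative constant, so that the $\Theta$-threshold controlled by Lemma~\ref{Th4} genuinely forces $L_i\geq d$. The secondary difficulty is the exponent bookkeeping in the beta product: the powers must be balanced so that the bound lands at exactly $O(\varepsilon)$, and it is precisely this balance that dictates the stated scalings $M\geq\varepsilon^{-3-2/N_{\min}}\kappa(d)^2$ and $P_{\text{J}}=\varepsilon^{-2/N_{\min}}\kappa(d)^2/\Phi^{2N_{\text{A}}/N_{\text{E}}}$.
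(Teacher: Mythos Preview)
Your reduction to a single-use bound via independence is fine, but the core step is inverted. Lemma~\ref{Th4} is a \emph{lower} bound on $\Pr\{\Theta(P_{\text{J}})<x\}$, not an upper bound. When you plug in $x=x_0=2d^{1/(2N_{\text{A}})}/\sqrt{M}$ and find $g(x_0,j)=C_j\varepsilon^{2/N_{\min}}\to0$, the beta product goes to $0$ as well; all that tells you is that the lower bound on $\Pr\{\Theta<x_0\}$ is small, which says nothing about $\Pr\{\Theta<x_0\}$ itself. In the paper Lemma~\ref{Th4} is applied at $x=\varepsilon$, where the product is driven close to $1$ by the choice of $M$, to conclude $\Pr\{\Theta\geq\varepsilon\}=O(\varepsilon^{\hat N})$; i.e., the lemma is used to show $\Theta$ is typically \emph{small}, not that it avoids being small.

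This matters because the argument actually needs two separate controls that you have collapsed into one. First, $\Pr\{D<d\}<O(\varepsilon)$ for the \emph{infinite}-lattice count $D=|S_{R_{\max}}\cap\Lambda_{\mathbb{C}}|$ is not a consequence of Lemma~\ref{Th4}; it is imported from the infinite-constellation USK analysis, and this is precisely where the prescribed $P_{\text{J}}$ together with $\kappa(d)$, $\Phi$, $N_{\min}$ enter. Second, to pass from $D$ to the finite count $L$ one needs a boundary argument: conditionally on $\Theta<\varepsilon$, an inner-sphere inclusion shows $S_{R_{\max}}\cap\Lambda_{\mathbb{C}}\subset\Lambda_{\text{F}}$ whenever $\mathbf{Gu}$ lies in the concentric ball of radius $\sqrt{M}\,r_{\text{eff}}(\Lambda_{\mathbb{C}})-2R_{\max}$, and the complement has volume fraction $1-(1-\Theta)^{2N_{\text{A}}}=O(\varepsilon)$. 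The role of the large $M$ is exactly to make $\Theta$ small so that this boundary loss is $O(\varepsilon)$, while the role of $P_{\text{J}}$ is to make $R_{\max}/r_{\text{eff}}$ large so that $D\geq d$. Treating the finite-constellation truncation as ``a bounded geometric constant'' and routing everything through a single threshold $\{\Theta<x_0\}$ misses both halves: it uses Lemma~\ref{Th4} in the wrong direction for the count, and it does not supply the inner-sphere step for the boundary.
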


\begin{proof}
See Appendix B.
\end{proof}

Theorem \ref{Th3} shows that for finite constellation $\mathcal{Q}^{N_{\text{%
A}}}$, the ideal secrecy outage can be made arbitrarily small. Given
a target pair $\left\{ \varepsilon ,d\right\} $, we can easily
compute the required values of $P_{\text{J}}$ and $M$ to realize the
USK cryptosystem.

\begin{example}
Fig.~2 examines the value of $P_{\text{out}}(2,1)$ as a function of $%
\varepsilon $. We choose $P_{\text{J}}$ and $M$ according to%
\begin{equation*}
\setlength{\abovedisplayskip}{3pt}
\setlength{\belowdisplayskip}{3pt}
P_{\text{J}}=\varepsilon ^{-2/N_{\min }}\kappa (d)^{2}/\Phi ^{2N_{\text{A}%
}/N_{\text{E}}}\text{ and }M\geq \varepsilon ^{-3-2/N_{\min }}\kappa (d)^{2}%
\text{.}
\end{equation*}%
We observe that $P_{\text{out}}(2,1)=1.6\times 10^{-4}$ when $P_{\text{J}%
}=3.5926$ and $M=256$. This simulation confirms that the secrecy outage for
the finite constellation $\mathcal{Q}^{N_{\text{A}}}$ can be made
arbitrarily small by increasing $P_{\text{J}}$ and $M$.
\end{example}

\begin{remark}
Using the proof of Theorem \ref{Th3}, we can show%
\begin{equation}
\Pr \left\{ L_{1}=1\text{, ..., }L_{K}=1\right\} <O(\varepsilon ^{K})\text{.}
\end{equation}%

In order to enhance security, Alice can scramble the $nK$ message
bits in $\{ {\bf m'}_i\}_1^K$ by using an invertible binary
$nK\times nK$ matrix ${\bf S}$ to produce the sequence
 \begin{equation} \label{scrambler}
\{ {\bf m}_i\}_1^K = \{ {\bf m'}_i\}_1^K {\bf S}\text{.}
\end{equation}
Only with a probability of the order of $\varepsilon^K$ given in
(37) Eve would be able to uniquely recover  $\{ {\bf m'}_i\}_1^K$ by
inverting ${\bf S}$. In all other cases, occurring with probability
$\varepsilon^{K'}$, where Eve recovers only $K'<K$ sub-blocks ${\bf
m'}_i $,  the system in (\ref{scrambler}) will have $2^{n(K-K')}$
different solutions. This enhances the overall ambiguity of Eve's
when guessing the entire message  $\{{\bf m'}_i\}_1^K$.
\end{remark}

\begin{figure}[tbp]
\centering\includegraphics[scale=0.4]{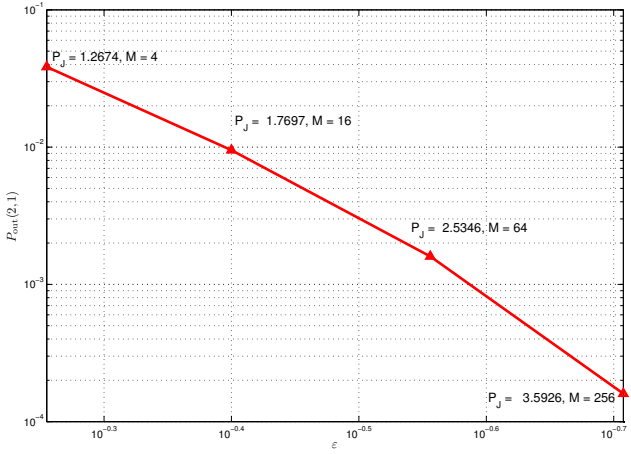}\vspace{-3mm}
\caption{ $P_{\text{out}}(2,1)$ vs. $\protect\varepsilon $ with $N_{\text{A}%
}=N_{\text{B}}=2$, $N_{\text{J,1}}=N_{\text{J,2}}=3$, and
$N_{\text{E}}=4$.}\vspace{-6mm}
\end{figure}

\section{Conclusions}

In this work, we showed how to construct a practical unshared secret key
(USK) cryptosystem\ using finite constellation inputs and some helpers. The
new USK scheme is specially designed for the scenario $N_{\text{E}}\geq N_{%
\text{A}}$, where the original USK scheme is not valid. We have shown that
Shannon's ideal secrecy can be obtained with an arbitrarily small outage
probability, by simply increasing the constellation size and jamming power.
Our results provide new ideas for the innovations and combinations of
cryptography and physical layer security. Future work will generalize USK to
relaying networks.

\section*{Appendix}

\subsection{Proof of Lemma 1}

Recalling that%
\begin{equation}
R_{\max }(P_{\text{J}})=\max_{||\mathbf{\hat{v}||}^{2}\leq P_{\text{J}%
}}\left\Vert \mathbf{\hat{H}}_{\text{JE}}\mathbf{\hat{Z}\hat{v}}\right\Vert
\text{,}  \label{aabb}
\end{equation}%
\begin{equation}
r_{\text{eff}}(\Lambda _{\mathbb{C}})=\sqrt{N_{\text{A}}/(\pi e)}|\det (%
\mathbf{G}^{H}\mathbf{G)|}^{\frac{1}{2N_{\text{A}}}}\text{.}
\end{equation}

From (\ref{R_maxi}), applying Cauchy--Schwarz inequality,%
\begin{align}
R_{\max }^{2}(P_{\text{J}})& \leq P_{\text{J}}\left\Vert \mathbf{\hat{H}}_{%
\text{JE}}\mathbf{Z}\right\Vert _{F}^{2}\leq P_{\text{J}}\left\Vert \mathbf{%
\hat{H}}_{\text{JE}}\right\Vert _{F}^{2}\left\Vert \mathbf{Z}\right\Vert
_{F}^{2}  \notag \\
& =P_{\text{J}}(N_{\text{J}}-N\cdot N_{\text{B}})\left\Vert \mathbf{\hat{H}}%
_{\text{JE}}\right\Vert _{F}^{2}\text{.}
\end{align}

From Alice perspective, $\mathbf{\hat{H}}_{\text{JE}}$ is a complex Gaussian
random matrix with i.i.d. components. Thus, $\left\Vert \mathbf{\hat{H}}_{%
\text{JE}}\right\Vert _{F}^{2}$ can be expressed in terms of a Chi-squared
random variable:%
\begin{equation}
\left\Vert \mathbf{\hat{H}}_{\text{JE}}\right\Vert _{F}^{2}=\frac{1}{2}%
\mathcal{X}^{2}\left( 2N_{\text{E}}N_{\text{J}}\right) \text{.}  \label{RM1}
\end{equation}

According to \cite{Verdu04}, $r_{\text{eff}}(\Lambda _{\mathbb{C}})$ can be
expressed in terms of $N_{\text{A}}$ independent Chi-squared variables:%
\begin{equation}
r_{\text{eff}}(\Lambda _{\mathbb{C}})=\sqrt{N_{\text{A}}/(\pi e)}\left(
\prod_{j=1}^{N_{\text{A}}}\frac{1}{2}\mathcal{X}^{2}\left( 2(N_{\text{E}%
}-j+1)\right) \right) ^{\frac{1}{2N_{\text{A}}}}\text{.}  \label{RE2}
\end{equation}

Moreover, since $\mathbf{G},\mathbf{\hat{H}}_{\text{JE}}\mathbf{,\hat{Z}}$
are mutually independent, $R_{\max }(P_{\text{J}})$ and $r_{\text{eff}%
}(\Lambda _{\mathbb{C}})$ are independent.

Then, we have
\begin{align}
& \Pr \left\{ \dfrac{2R_{\max }(P_{\text{J}})}{\sqrt{M}r_{\text{eff}%
}(\Lambda _{\mathbb{C}})}<x\right\}  \notag \\
& \geq \Pr \left\{ \frac{P_{\text{J}}\left\Vert \mathbf{\hat{H}}_{\text{JE}%
}\right\Vert _{F}^{2}}{r_{\text{eff}}(\Lambda _{\mathbb{C}})^{2}}<\frac{%
x^{2}M}{4(N_{\text{J}}-N\cdot N_{\text{B}})}\right\}  \notag \\
& \overset{(a)}{\geq }\Pr \left\{ \dfrac{\mathcal{X}^{2}\left( 2N_{\text{E}%
}N_{\text{J}}\right) }{\dfrac{N_{\text{A}}}{\sum_{j=1}^{N_{\text{A}}}\dfrac{1%
}{\mathcal{X}^{2}\left( 2(N_{\text{E}}-j+1)\right) }}}<\frac{x^{2}MN_{\text{A%
}}}{4\pi eP_{\text{J}}(N_{\text{J}}-N\cdot N_{\text{B}})}\right\}  \notag \\
& =\Pr \left\{ \sum_{j=1}^{N_{\text{A}}}\frac{\mathcal{X}^{2}\left( 2N_{%
\text{E}}N_{\text{J}}\right) }{\mathcal{X}^{2}\left( 2(N_{\text{E}%
}-j+1)\right) }<\frac{x^{2}MN_{\text{A}}^{2}}{4\pi eP_{\text{J}}(N_{\text{J}%
}-N\cdot N_{\text{B}})}\right\}  \notag \\
& \overset{(b)}{>}\prod\limits_{j=1}^{N_{\text{A}}}\Pr \left\{ \frac{%
\mathcal{X}^{2}\left( 2N_{\text{E}}N_{\text{J}}\right) }{\mathcal{X}%
^{2}\left( 2(N_{\text{E}}-j+1)\right) }\leq \frac{x^{2}MN_{\text{A}}}{4\pi
eP_{\text{J}}(N_{\text{J}}-N\cdot N_{\text{B}})}\right\}  \notag \\
& =\prod\limits_{j=1}^{N_{\text{A}}}\Pr \left\{ \mathcal{F}(2N_{\text{E}}N_{%
\text{J}}\text{, }2(N_{\text{E}}-j+1))\leq g(x,j)\right\} \text{,}
\label{xxx}
\end{align}%
where $g(x,j)$ is given in (\ref{g_x}), and $\mathcal{F}(k_{1}$, $k_{2})$
represents an $\mathcal{F}$-distributed random variable with $k_{1}$ and $%
k_{2}$ degrees of freedom. ($a$) holds due to the inequality of geometric
and harmonic means. ($b$) holds by induction on the fact that if the
non-negative random variables $A_{i}$, $1\leq i\leq N$, are mutually
independent, given a constant $C>0$,%
\begin{align}
& \Pr \left\{ \sum_{i=1}^{N}A_{i}<C\right\} >\Pr \left\{ A_{1}\leq
C/N;\sum_{i=2}^{N}A_{i}\leq C(N-1)/N\right\}  \notag \\
& =\Pr \left\{ A_{1}\leq C/N\right\} \Pr \left\{ \sum_{i=2}^{N}A_{i}\leq
C(N-1)/N\right\} \text{.}
\end{align}

Since the cdf of $\mathcal{F}(k_{1}$, $k_{2})$ can be expressed using the
regularized incomplete beta function \cite{Imbeta}, the final expression of (%
\ref{xxx}) is given in (\ref{P_t}).

\QEDA\vspace{-3mm}

\subsection{Proof of Theorem 1}

From Alice's perspective, $L_{i}$ is a function of $\mathbf{G}_{i}$, $%
\mathbf{\hat{H}}_{\text{JE},i}$, $\mathbf{,\hat{Z}}_{i}$, and $\mathbf{\hat{v%
}}_{i}$. Since $\left\{ \mathbf{G}_{i}\right\} _{1}^{K},\left\{ \mathbf{\hat{%
H}}_{\text{JE},i}\right\} _{1}^{K},\left\{ \mathbf{\hat{Z}}_{i}\right\}
_{1}^{K},\left\{ \mathbf{\hat{v}}_{i}\right\} _{1}^{K}$are mutually
independent, $\left\{ L_{i}\right\} _{1}^{K}$ are mutually independent. From
(\ref{PF_out}),
\begin{equation}
\setlength{\abovedisplayskip}{3pt}\setlength{\belowdisplayskip}{3pt}P_{\text{%
out}}(d,K)=1-\prod\limits_{i=1}^{K}\Pr \left\{ L_{i}\geq d\right\} \text{.}
\label{122}
\end{equation}

We then evaluate $\Pr \left\{ L_{i}<d\right\} $. For simplicity, we remove
the index $i$. We define%
\begin{equation}
\setlength{\abovedisplayskip}{3pt}\setlength{\belowdisplayskip}{3pt}%
D=|S_{R_{\max }}\cap \Lambda _{\mathbb{C}}|\text{.}  \label{Di}
\end{equation}%
According to \cite[Th. 2]{Liu147}, with $P_{\text{J}}= \varepsilon
^{-2/N_{\min }}\kappa (d)^{2}/\Phi ^{2N_{\text{A}}/N_{\text{E}}}$,
the
jammers can ensure%
\begin{equation}
\setlength{\abovedisplayskip}{3pt}\setlength{\belowdisplayskip}{3pt}\Pr
(D<d)<O(\varepsilon )\text{,}
\end{equation}%
where $N_{\min }$ is given in (\ref{n_min}), $\kappa (d)$ is given in (\ref%
{kappa}), and $\Phi $ is given in (\ref{Pi_SVD}). We can upper bound $\Pr
\left\{ L<d\right\} $ by%
\begin{align}
& \Pr \left\{ L<d\right\} =\Pr \{L<d|D\geq d\}\Pr \{D\geq d\}  \notag \\
& +\Pr \{L<d|D<d\}\Pr \{D<d\}  \notag \\
& \leq \Pr \left\{ L<D|D\geq d\right\} \Pr \{D\geq d\}+O(\varepsilon )
\notag \\
& \leq \Pr \left\{ L<D\right\} +O(\varepsilon )\text{.}  \label{d_D}
\end{align}%
We then evaluate $\Pr \left\{ L<D\right\} $.%
\begin{align}
& \Pr \left\{ L<D\right\} =\Pr \{L<D|\Theta (P_{\text{J}})<\varepsilon \}\Pr
\{\Theta (P_{\text{J}})<\varepsilon \}  \notag \\
& +\Pr \{L<D|\Theta (P_{\text{J}})\geq \varepsilon \}\Pr \{\Theta (P_{\text{J%
}})\geq \varepsilon \}  \notag \\
& \leq \Pr \{L<D|\Theta (P_{\text{J}})<\varepsilon \}+\Pr \{\Theta (P_{\text{%
J}})\geq \varepsilon \}\text{,}  \label{L_E}
\end{align}%
where $\Theta (P_{\text{J}})$ is given in (\ref{RRR}).

We then evaluate the two terms in (\ref{L_E}), separately.

\emph{1)} $\Pr \left\{ L<D|\Theta (P_{\text{J}})<\varepsilon \right\} $\emph{%
:} Recalling that%
\begin{equation}
\setlength{\abovedisplayskip}{3pt} \setlength{\belowdisplayskip}{3pt}
\mathbf{y}=\mathbf{G{{\mathbf{u+}}}\hat{H}}_{\text{JE}}\mathbf{\hat{Z}\hat{v}%
}\text{ and }\Lambda _{\text{F}}=\{\mathbf{G\mathbf{{{\mathbf{u}}}}},{%
\mathbf{{\mathbf{u}}}}\in \mathcal{Q}{^{N_{\text{A}}}\}}\text{.}
\end{equation}

Since $L=|S_{R_{\max }}\cap \Lambda _{\text{F}}|$, we begin by checking the
boundary of $\Lambda _{\text{F}}$. Let $\mathbf{O}$ be the center point of $%
\Lambda _{\text{F}}$. According to \cite{Ajtai02}, for the Gaussian random
lattice basis $\mathbf{G}$, the boundary of $\Lambda _{\text{F}}$ can be
approximated by a sphere $S_{\text{F,S}}$ centered at $\mathbf{O}$ with
radius $\sqrt{M}r_{\text{eff}}(\Lambda _{\mathbb{C}})$, where $r_{\text{eff}%
}(\Lambda _{\mathbb{C}})$ is given in (\ref{r_eff}).

Given $\Theta (P_{\text{J}})<\varepsilon $ and $\varepsilon <1$, we have $%
\sqrt{M}r_{\text{eff}}(\Lambda _{\mathbb{C}})>2R_{\max }(P_{\text{J}})$. We
define a concentric sphere $S_{\text{F,C}}$ with radius $\sqrt{M}r_{\text{eff%
}}(\Lambda _{\mathbb{C}})-2R_{\max }(P_{\text{J}})$, where $R_{\max }(P_{%
\text{J}})$ is given in (\ref{R_maxi}). We then check when $L=D$ given $%
\Theta (P_{\text{J}})<\varepsilon $.

If $\mathbf{G{{\mathbf{u}}}}\in S_{\text{F,C}}$, using triangle inequality,
we have%
\begin{align}
||\mathbf{y-O||} \leq \left\Vert \mathbf{G{{\mathbf{u}}}}-\mathbf{O}%
\right\Vert +\left\Vert \mathbf{\hat{H}}_{\text{JE}}\mathbf{\hat{Z}\hat{v}}%
\right\Vert \leq \sqrt{M}r_{\text{eff}}(\Lambda _{\mathbb{C}})-R_{\max }(P_{%
\text{J}})  \label{b1}
\end{align}%
We then check the locations of the $D$ elements in $S_{R_{\max }}\cap
\Lambda _{\mathbb{C}}$ (\ref{Di}), denoted by, $\mathbf{Gu}_{t}^{\prime }$, $%
1\leq t\leq D$. Note that%
\begin{equation}
\setlength{\abovedisplayskip}{3pt} \setlength{\belowdisplayskip}{3pt}
\left\Vert \mathbf{Gu}_{t}^{\prime }-\mathbf{{{\mathbf{y}}}}\right\Vert \leq
R_{\max }(P_{\text{J}})\text{.}  \label{b2}
\end{equation}%
From (\ref{b1}) and (\ref{b2}), using triangle inequality, for all $t$,%
\begin{equation}
\setlength{\abovedisplayskip}{3pt} \setlength{\belowdisplayskip}{3pt}
\left\Vert \mathbf{Gu}_{t}^{\prime }-\mathbf{O}\right\Vert \leq \left\Vert
\mathbf{y-O}\right\Vert +\left\Vert \mathbf{Gu}_{t}^{\prime }-\mathbf{{{%
\mathbf{y}}}}\right\Vert \leq \sqrt{M}r_{\text{eff}}(\Lambda _{\mathbb{C}})%
\text{.}
\end{equation}%
Therefore, $S_{R_{\max }}\cap \Lambda _{\mathbb{C}}\subset \Lambda _{\text{F}%
}$, i.e., $L=D$.

If $\mathbf{G\mathbf{\mathbf{V}}}_{1}\mathbf{{{\mathbf{u}}}}\notin S_{\text{%
F,C}}$, there is a probability that $L<D$. Therefore, we have%
\begin{equation}
\setlength{\abovedisplayskip}{3pt} \setlength{\belowdisplayskip}{3pt} \Pr
\left\{ L<D|\Theta (P_{\text{J}})<\varepsilon \right\} <\Pr \left\{ \mathbf{G%
{{\mathbf{u}}}}\notin S_{\text{F,C}}\right\} \text{.}  \label{EEE1}
\end{equation}%
Since $\mathbf{G{{\mathbf{u}}}}$ is uniformly distributed over $S_{\text{F,S}%
}$, we have%
\begin{equation}
\setlength{\abovedisplayskip}{3pt} \setlength{\belowdisplayskip}{3pt} \Pr
\left\{ \mathbf{G{{\mathbf{u}}}}\in S_{\text{F,C}}\right\} =\frac{\text{vol}%
(S_{\text{F,C}})}{\text{vol}(S_{\text{F,S}})}=\left( 1-\Theta (P_{\text{J}%
})\right) ^{2N_{\text{A}}}>\left( 1-\varepsilon \right) ^{2N_{\text{A}}}
\label{EEE2}
\end{equation}

Based on (\ref{EEE1}) and (\ref{EEE2}), we have%
\begin{equation}
\setlength{\abovedisplayskip}{3pt} \setlength{\belowdisplayskip}{3pt} \Pr
\left\{ L<D|\Theta (P_{\text{J}})<\varepsilon \right\} <1-\left(
1-\varepsilon \right) ^{2N_{\text{A}}}=O(\varepsilon )\text{.}  \label{pv1}
\end{equation}

\emph{2)} $\Pr \{\Theta (P_{\text{J}})\geq \varepsilon \}$\emph{:} Using
Lemma \ref{Th4} with $M\geq \varepsilon ^{-3-2/N_{\min }}\kappa (d)^{2}$, we
have%
\begin{align}
& \Pr \left\{ \Theta (P_{\text{J}})<\varepsilon \right\} \geq
\prod\limits_{j=1}^{N_{\text{A}}}B_{a,b(j)}\left( 1-\frac{b(j)}{%
ag(\varepsilon ,j)+b(j)}\right)  \notag \\
&\!\overset{(a)}{=}\prod\limits_{j=1}^{N_{\text{A}}}1-B_{b(j),a}\left( \frac{%
b(j)}{ag(\varepsilon ,j)+b(j)}\right) \overset{(b)}{=}\prod\limits_{j=1}^{N_{%
\text{A}}}\left( 1-O(\varepsilon ^{b(j)})\right)  \notag \\
&>(1-O(\varepsilon ^{\hat{N}})) ^{N_{\text{A}}}\text{,}
\end{align}%
where $\hat{N}=N_{\text{E}}-N_{\text{A}}+1$ and%
\begin{equation}
a=N_{\text{E}}N_{\text{J}}\text{ and }b(j)=N_{\text{E}}-j+1\text{.}
\end{equation}%
$(a)$ and $(b)$ hold due to the facts that%
\begin{equation}
B_{a,b(j)}(x)=1-B_{b(j),a}(1-x)\text{,}
\end{equation}%
\begin{equation}
B_{b(j),a}(x)=O(x^{b(j)})\text{, for }x\rightarrow 0\text{.}
\end{equation}

Consequently, we have%
\begin{equation}
\Pr \{\Theta (P_{\text{J}})\geq \varepsilon \}<1-\left( 1-O(\varepsilon ^{%
\hat{N}})\right) ^{N_{\text{A}}}=O(\varepsilon ^{\hat{N}})\text{.}
\label{pv2}
\end{equation}

By substituting (\ref{L_E}), (\ref{pv1}) and (\ref{pv2}) to (\ref{d_D}), we
have%
\begin{equation}
\Pr \left\{ L<d\right\} <O(\varepsilon )\text{.}  \label{tt1}
\end{equation}

From (\ref{122}) and (\ref{tt1}), if $M\geq\varepsilon
^{-3-2/N_{\min }}\kappa (d)^{2}$ and $P_{\text{J}}= \varepsilon
^{-2/N_{\min }}\kappa
(d)^{2}/\Phi ^{2N_{\text{A}}/N_{\text{E}}}$, we have%
\begin{equation}
P_{\text{out}}(d,K)<1-(1-O(\varepsilon ))^{K}=O(\varepsilon )\text{.}
\end{equation}%
\QEDA

\vspace{-3 mm}

\bibliographystyle{IEEEtran}
\bibliography{IEEEabrv,LIUBIB}

\end{document}